\DeclarePairedDelimiter\set{\lbrace}{\rbrace}
\DeclarePairedDelimiterX\Set[2]{\lbrace}{\rbrace}{#1\,\delimsize\vert\,\mathopen{}#2}
\newcommand{\ecc}{\bar d}
\newcommand{\p}{p}
\newcommand{\s}{s}
\title{Efficient Farthest-Point Queries in Two-Terminal~Series-Parallel Networks}
\author{Carsten Grimm\inst{1}\fnmsep\inst{2}}
\institute{Otto-von-Guericke-Universität Magdeburg, Magdeburg, Germany \and Carleton University, Ottawa, Ontario, Canada}
\begin{document}

\maketitle

\begin{abstract}
Consider the continuum of points along the edges of a network, i.e., a connected, undirected graph with positive edge weights. We measure the distance between these points in terms of the weighted shortest path distance, called the \emph{network distance}. Within this metric space, we study farthest points and farthest distances. We introduce a data structure supporting queries for the farthest distance and the farthest points on two-terminal series-parallel networks. This data structure supports farthest-point queries in \(O(k + \log n)\) time after \(O(n \log \p)\) construction time, where~\(k\) is the number of farthest points, \(n\) is the size of the network, and~\(\p\) parallel operations are required to generate the network.
\end{abstract}

\section{Introduction}

Consider a geometric network with positive edge weights. For any two points on this network (i.e., points that may be vertices or in the interior of an edge), their \emph{network distance} is the weight of a weighted shortest-path connecting them. Within this metric space, we study farthest points and farthest distances. We introduce a data structure supporting queries for the farthest distance and the farthest points on two-terminal series-parallel networks. 

As a prototype application, imagine the task to find the ideal location for a new hospital within the network formed by the streets of a city. One criterion for this optimization would be the emergency unit response time, i.e., the worst-case time an emergency crew needs to drive from the hospital to the site of an accident. However, a location might be optimal in terms of emergency unit response time, but unacceptable with respect to another criterion such as construction costs. We provide a data structure that would allow a decision maker to quickly compare various locations in terms of emergency unit response time. 

We obtain our data structure for two-terminal series-parallel networks by studying simpler networks reflecting parallel structure (parallel-path) and serial structure (bead-chains). Combining these insights, we support queries on flat series-parallel networks (abacus). Finally, we decompose series-parallel networks into a tree of nested abaci and combine their associated data structures. 

Our focus on supporting human decision makers with data structures deviates from the common one-shot optimization problems in location analysis, where we assume that only one factor determines suitable locations for some facility in a network.  Moreover, we illustrate new ways of exploiting different parallel structures of networks that may be useful for tackling related problems. 

\subsection{Preliminaries} \label{sec::preliminaries}

A \emph{network} is defined as a simple, connected, undirected graph \(N = (V,E)\) with positive edge weights.  We write \(w_{uv}\) to denote the weight of the edge \(uv \in E\) that connects the vertices \(u,v\in V\). A point \(p\) on edge \(uv\) subdivides \(uv\) into two sub-edges \(up \) and \(pv\) with \(w_{up} = \lambda w_{uv}\) and \(w_{pv} = (1-\lambda) w_{uv}\), for some \(\lambda \in [0,1]\).\footnote{Observe that \(p \notin V\) when \(\lambda \in (0,1)\) in which case none of the sub-edges \(up\) and \(pv\) are edges in \(E\). When \(\lambda = 0\) or \(\lambda = 1\), the point \(p\) coincides with \(u\) and \(v\), respectively.} 
We write \(p \in uv\) when \(p\) is on edge \(uv\) and \(p \in N\) when \(p\) is on some edge of~\(N\). The \emph{network distance} between \(p, q \in N\), denoted by \(d_N(p,q)\), is measured as the weighted length of a shortest path from \(p\) to \(q\) in \(N\). We denote the \emph{farthest distance} from \(p\) by \(\ecc_N(p)\), i.e., \(\ecc_N(p) = \max_{q \in N} d_N(p,q)\). Accordingly, we say a point \(\bar p\) on \(N\) is farthest from \(p\) if and only if \(d_N(p,\bar p) = \ecc_N(p)\). 

We develop data structures supporting the following queries in a network~\(N\). Given a point \(p\) on \(N\), what is the farthest distance from \(p\)? What are the farthest points from \(p\) in \(N\)? We refer to the former as \emph{farthest-distance query} and to the latter as \emph{farthest-point query}. The query point \(p\) is represented by the edge \(uv\) containing \(p\) together with the value \(\lambda \in [0,1]\) such that  \(w_{up} = \lambda w_{uv}\).
\begin{figure}[ht]
	\centering \vspace{-2\baselineskip}
	\subfloat[The operations.\label{fig::seriesparallel::operations}]{\includegraphics[]{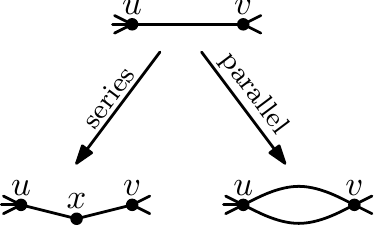}} \quad
	\subfloat[A two-terminal series-parallel network.]{\qquad\includegraphics[scale=0.8, page=1]{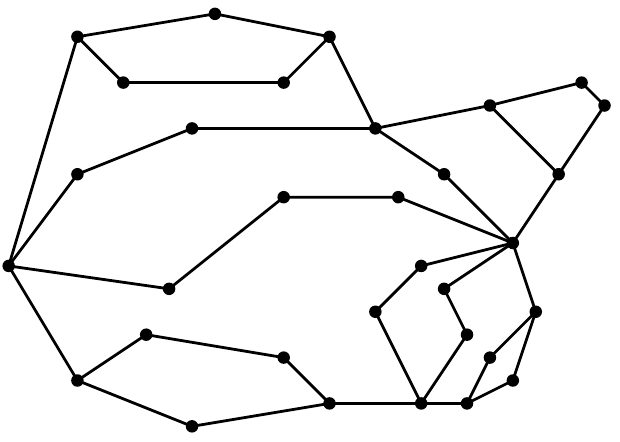}\quad}
	\caption{The operations (a) that generate two-terminal series-parallel networks (b).\label{fig::seriesparallel}} \vspace{-\baselineskip}
\end{figure}

The term \emph{series-parallel network} stems from the following two operations that are illustrated in Fig.\nobreakspace \ref {fig::seriesparallel}. The \emph{series operation}  splits an existing edge \(uv\) into two new edges \(ux\) and \(xv\) where \(x\) is a new vertex. The \emph{parallel operation} creates a copy of an existing edge. A network \(N\) is \emph{two-terminal series-parallel} when its underlying graph\footnote{The final graph is simple even if intermediate graphs have loops and multiple edges.} can be generated from a single edge \(uv\) using a sequence of series and parallel operations; the vertices \(u\) and \(v\) are called \emph{terminals} of~\(N\). We refer to the number of parallel operations required to generate \(N\)  as the \emph{parallelism} of \(N\) and to the number of series operations as the \emph{serialism} of \(N\).

A network is called \emph{series-parallel} when every bi-connected component is \emph{two-terminal series-parallel} with respect to any two vertices. In this work, we only consider bi-connected networks; in future work, we shall adapt our treatment of multiple bi-connected components from cacti~\cite{bose2015optimal} to series-parallel networks. 

\subsection{Related Work} \label{sec::relatedwork}

Duffin~\cite{duffin1965topology} studies series-parallel networks to compute the resistance of circuit boards. He characterizes three equivalent definitions of series-parallel networks and establishes their planarity. Two-terminal series-parallel networks admit linear time solutions for several problems that are NP-hard on general networks~\cite{bern1987linear,takamizawa1982linear}. Since series-parallel networks have tree-with two~\cite{brandstaedt1999graphclasses}, this applies to all problems with efficient algorithms on networks with bounded treewidth~\cite{arnborg1991easy}. 

A network Voronoi diagram subdivides a network depending on which site  is closest~\cite{hakimi1992voronoi} or farthest~\cite{erwig2000graph,okabe2008generalized} among a finite set of sites. Any data structure for farthest-point queries on a network represents a network farthest-point Voronoi diagram where all points on the network are considered sites~\cite{bose2013network}. 

A \emph{continuous absolute center} is a point on a network with minimum farthest-distance. Computing a continuous absolute center takes \(O(n)\) time on cacti~\cite{ben2007efficient} and \(O(m^2 \log n)\) time on general networks~\cite{hansen1991continuous}. As a by-product, we obtain all continuous absolute centers of a series-parallel network in \(O(n \log \p)\) time. 

\subsection{Structure and Results} \label{sec::results}

We introduce a data structure supporting queries for the farthest distance and the farthest points on two-terminal series-parallel networks. We obtain this data structure by isolating sub-structures of series-parallel networks: In Sections\nobreakspace \ref {sec::paralellfirst} and\nobreakspace  \ref {sec::beadchains}, we study networks consisting of parallel paths and networks consisting of a cycle with attached paths (bead-chains), respectively. In Section\nobreakspace \ref {sec::abacus}, we combine these results into abacus networks, which are series-parallel networks without nested structures. Finally, we combine these intermediate data structures to obtain our main result in Section\nobreakspace \ref {sec::twoterminal}. Table\nobreakspace \ref {tab::results} summarizes the characteristics of the proposed data structures and compares them to previous results.%
\begin{table}[htb] \footnotesize
\centering %\vspace{-\baselineskip}
\begin{tabular}{rcccr}
 \toprule \bfseries Type & \bfseries Farthest-Point Query & \bfseries  Size   & \bfseries Construction Time & \bfseries Reference\\  \toprule
 General & \( O(k + \log n)\) &  \(O(m^2)\) & \(O(m^2 \log n)\) & \cite{bose2013network} \\ \midrule
Tree &\(O(k)\) & \(O(n)\) & \(O(n)\) & \cite{bose2015optimal}\\
Cycle & \(O(\log n)\) & \(O(n)\) & \(O(n)\) & \cite{bose2015optimal} \\
Uni-Cyclic & \(O(k+\log n)\) & \(O(n)\) & \(O(n)\) & \cite{bose2015optimal}\\
Cactus &  \(O(k+\log n)\) & \(O(n)\) & \(O(n)\) & \cite{bose2015optimal} \\ \midrule
Parallel-Path&  \(O(k + \log n )\) & \(O(n)\) & \(O(n)\) & this work \\
Bead-Chain  & \(O(k + \log n )\) & \(O(n)\) & \(O(n)\) & this work\\
Abacus  &  \( O(k + \log n)\) & \(O(n)\) & \(O(n \log \p)\)& this work\\ 
Series-Parallel  & \(O(k + \log n )\)& \(O(n )\) & \(O(n \log \p)\)& this work \\\bottomrule \\
\end{tabular}
\caption{The traits of our data structures for queries in different types of networks, with \(n\) vertices, \(m\) edges, \(k\) reported farthest points, and parallelism \(\p\).} \label{tab::results}  \vspace{-\baselineskip}
\end{table}

%Due to space limitations, we provide the proofs for our claims in this work with additional illustrations in an extended version of this paper~\cite{grimm2015efficient}.
%

\section{Parallel-Path Networks} \label{sec::paralellfirst}

\begin{wrapfigure}[10]{r}{6cm}
	\centering \vspace{-1.5\baselineskip}
	\includegraphics[scale=0.9]{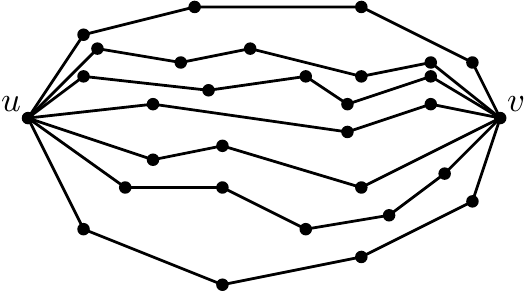} \vspace{-0.5\baselineskip}
	\caption{A parallel-path two-terminal series-parallel network with parallelism \(\p = 7\).\label{fig::parallelfirst}}
\end{wrapfigure}

A \emph{parallel-path  network} consists of a bundle of edge disjoint paths connecting two vertices \(u\) and \(v\), as illustrated in Fig.\nobreakspace \ref {fig::parallelfirst}. In terms of series-parallel networks, parallel-path networks are generated from an edge \(uv\) using a sequence of parallel operations followed by a sequence of series operations.

Let \(P_1, P_2, \dots, P_\p\) be the paths of weighted lengths \(w_1 \le w_2 \le \dots \le w_\p\) between the terminals \(u\) and \(v\) in a parallel-path  network \(N\). Consider a shortest path tree\footnote{More precisely, we consider \emph{extended shortest path trees}~\cite{okabe2008generalized} which result from splitting each non-tree edge  \(st\) of a shortest path tree into two sub-edges \(sx\) and \(xt\), where all points on \(sx\) reach the root through \(s\) and all points on \(xt\) reach the root through \(t\).} from a query point \(q \in N\).  As depicted in Fig.\nobreakspace \ref {fig::parallelfirst::threecases}, there are three cases: either all shortest paths from \(q\) reach \(v\) via \(u\) (\emph{left case}), or all shortest paths from \(q\) reach \(u\) via \(v\) (\emph{right case}), or neither (\emph{middle case}). We distinguish the three cases using the following notation. Let  \(\bar x_i\) denote the farthest point from \(x \in \set{u,v}\) among the points of path \(P_i\), i.e., \(\bar x_i\) is a point on \(P_i\) such that \(d(x,\bar x_i) = \max_{y \in P_i} d(x,y)\).  Together with Fig.\nobreakspace \ref {fig::parallelfirst::threecases}, the next lemma justifies our choice of the names \emph{left case}, \emph{middle case}, and \emph{right case}.%
\begin{figure}[H]
	\centering \vspace{-2\baselineskip}
	\subfloat[][The left case. \label{fig::parallelfirst::threecases::leftcase}]{\includegraphics[page=2]{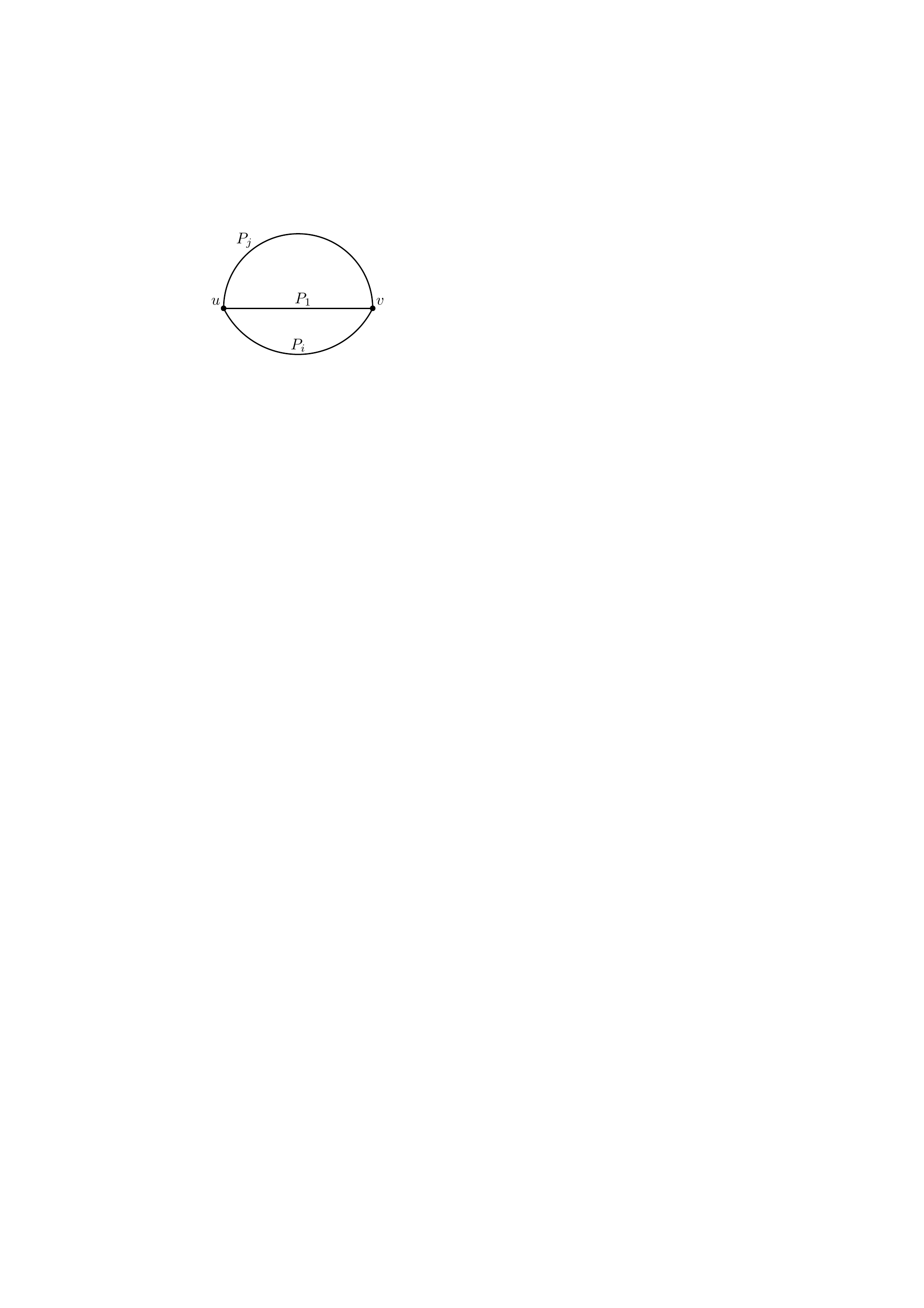}}\hfill
	\subfloat[][The middle case. \label{fig::parallelfirst::threecases::middlecase}]{\includegraphics[page=3]{parallelfirst_three_cases}}\hfill
	\subfloat[][The right case. \label{fig::parallelfirst::threecases::rightcase}]{\includegraphics[page=4]{parallelfirst_three_cases}}
	
	\caption{The three cases for queries in parallel-path networks. Consider the shortest path tree from a query point \(q\in P_i\) along the paths \(P_1\) (center), \(P_i\) (bottom),~and~\(P_j\)~(top).
	In the left case~\protect\subref{fig::parallelfirst::threecases::leftcase}, we reach \(v\) via \(u\). In the right case~\protect\subref{fig::parallelfirst::threecases::rightcase}, we reach \(u\) via \(v\). In the middle case~\protect\subref{fig::parallelfirst::threecases::middlecase}, neither holds, i.e., we \emph{enter} path \(P_1\) from both terminals \(u\) and \(v\). Points colored red are reached fastest via a path through \(u\) or towards \(u\) along \(uv\), while points colored blue are reached fastest via a path through \(v\) or towards \(v\) along \(uv\).
	\label{fig::parallelfirst::threecases}
	} \vspace{-1\baselineskip}
\end{figure}

\begin{lemma} \label{thm::parallelfirst::threecases} Consider a query \(q \) from  the \(i\)-th path of a parallel-path network.%
\begin{enumerate}[(i)]
\item We are in the left case when \(q\) lies on the sub-path from \(u\) to \(\bar v_i\) with \(q \ne \bar v_i\),
\item we are in the middle case when \(q\) lies on the sub-path from \(\bar v_i\) to \(\bar u_i\), and
\item we are in the right case when \(q\) lies on the sub-path from \(\bar u_i\) to \(v\) with \(q \ne \bar u_i\).
\end{enumerate}
\end{lemma}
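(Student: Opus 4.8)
The plan is to reduce every network distance that matters to a one-dimensional distance measured along \(P_i\), and then to read off the three cases from the position of \(q\) relative to the balance points of these one-dimensional functions, which turn out to be exactly \(\bar v_i\) and \(\bar u_i\).

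First I would record the basic distance formula. Since every interior vertex of a path \(P_j\) has degree two, any walk that reaches an interior point \(y \in P_i\) must enter \(P_i\) through one of the terminals \(u\) or \(v\). Writing \(a_u(y)\) and \(a_v(y)\) for the arc lengths along \(P_i\) from \(y\) to \(u\) and to \(v\) (so \(a_u(y)+a_v(y)=w_i\)), and letting \(c\) denote the length of a shortest \(u\)--\(v\) connection that avoids \(P_i\) (this equals \(w_1\) when \(i\neq 1\) and \(w_2\) when \(i=1\), and in either case is a positive constant independent of \(y\)), I obtain \(d_N(u,y)=\min\bigl(a_u(y),\,c+a_v(y)\bigr)\) and, symmetrically, \(d_N(v,y)=\min\bigl(a_v(y),\,c+a_u(y)\bigr)\). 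The point where the two arguments of the minimum coincide is where the distance function attains its maximum, so \(\bar v_i\) sits at arc length \((w_i-c)/2\) from \(u\) and \(\bar u_i\) at arc length \((w_i+c)/2\) from \(u\); since \(c>0\), this places \(\bar v_i\) strictly before \(\bar u_i\) as we travel from \(u\) to \(v\), justifying the ordering \(u,\bar v_i,\bar u_i,v\) implicit in the statement. (When \(c\ge w_i\) the balance point falls outside the path, and \(\bar v_i\) collapses to \(u\), respectively \(\bar u_i\) to \(v\); the argument below degenerates gracefully in that case.)

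Next I would translate the three cases into comparisons of these same quantities at \(q\). For the query point at arc length \(a\) from \(u\), the same entering-through-a-terminal argument gives \(d_N(q,v)=\min(w_i-a,\,a+c)\), and the shortest path from \(q\) to \(v\) runs through \(u\) precisely when the detour term is the strictly smaller one, i.e.\ when \(a+c<w_i-a\), equivalently \(a<(w_i-c)/2\). This is exactly the condition that \(q\) lies strictly on the \(u\)-side of \(\bar v_i\), which establishes the left case (i). At \(q=\bar v_i\) the two terms tie, so a shortest path to \(v\) also runs directly along \(P_i\); hence not \emph{all} shortest paths reach \(v\) via \(u\), which is what forces the restriction \(q\neq\bar v_i\). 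The right case (iii) follows by the mirror-image computation with \(d_N(q,u)=\min(a,\,w_i-a+c)\) and threshold \((w_i+c)/2\), the arc length of \(\bar u_i\). The middle case (ii) is then the complement, \((w_i-c)/2\le a\le(w_i+c)/2\), i.e.\ \(q\) on the sub-path from \(\bar v_i\) to \(\bar u_i\); here \(q\) reaches both terminals without a strict detour, matching the description that \(P_1\) is entered from both \(u\) and \(v\).

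The one genuinely delicate point is making sure the detour constant \(c\) is the same in the definition of \(\bar v_i\) (a farthest-from-\(v\) computation) and in the query condition ``\(q\) reaches \(v\) via \(u\)''; both detours avoid \(P_i\) and reduce to the shortest alternative \(u\)--\(v\) path, so they agree, and it is precisely this coincidence that makes the farthest point \(\bar v_i\) the exact threshold between the left and middle cases. The remaining care is purely in the boundary handling: the ties at \(\bar v_i\) and \(\bar u_i\) explain the strict inequalities \(q\neq\bar v_i\) and \(q\neq\bar u_i\), and the degenerate situation \(c\ge w_i\) (possible only for \(i=1\), when \(P_1\) is strictly shortest) collapses the left and right sub-paths to empty and leaves all of \(P_1\) in the middle case, which is consistent since a query on a strictly shortest path always reaches both terminals directly.
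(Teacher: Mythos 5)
Your proof is correct, and it takes a more computational route than the paper's. You write the distance from a terminal to a point \(y \in P_i\) explicitly as a minimum of two linear functions of arc length, \(d(v,y)=\min\bigl(a_v(y),\,c+a_u(y)\bigr)\) with \(c=\min_{j\ne i} w_j\), locate \(\bar v_i\) and \(\bar u_i\) as the balance points of these minima, and then read off each case as a strict inequality between the direct term and the detour term at \(q\). The paper instead argues qualitatively: in the left case \(d(q,v)<w_{qv}\) forces \(q\ne\bar v_i\), and \(\bar v_i\) cannot separate \(q\) from \(u\) because that would contradict \(\bar v_i\) being farthest from \(v\); conversely, if \(q\) lies strictly between \(u\) and \(\bar v_i\), no shortest path to \(v\) may contain \(\bar v_i\) in its interior, so all of them detour through \(u\). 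Both proofs treat the right case by symmetry and the middle case by elimination. Your version buys the explicit coordinates \((w_i\mp c)/2\) of \(\bar v_i\) and \(\bar u_i\) for free (facts the paper only derives later, in the left-case analysis and in Lemma~2), and it makes the boundary behaviour at the ties transparent; the paper's version avoids coordinates entirely and therefore needs no separate discussion of the degenerate situation. One small slip in your closing remark: the degeneracy \(c\ge w_i\) occurs exactly when \(w_i=w_1\), so it always occurs for \(i=1\) (since \(c=w_2\ge w_1\)) and, in the presence of ties, can occur for other indices as well; this does not affect your argument, whose handling of the degenerate case is index-independent.
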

\begin{proof}
Assume we are in the left case, where every shortest path from the query point \(q \in P_i\) to \(v\) contains \(u\), i.e., \(d(q,v) = d(q,u) + d(u,v)\) and \(d(q, v) < w_{qv}\). The latter implies \(q\ne \bar v_i\), since \(d(v, \bar v_i) = w_{\bar v_iv}\). Moreover, \(\bar v_i\) cannot lie on the sub-path from \(q\) to \(u\) along \(P_i\), since otherwise 
\[
d(q,v) = d(q,\bar v_i) + d(\bar v_i, v) \stackrel{q \ne \bar v_i}> d(\bar v_i, v) \enspace,  
\] 
contradicting the choice of \(\bar v_i\) as farthest point from \(v\) on \(P_i\). Therefore, \(q\) must lie between \(u\) and \(\bar v_i\) along \(P_i\) when we are in the left case.

Conversely, assume \(q\) lies between \(u\) and \(\bar v_i\) along \(P_i\) with \(q \ne \bar v_i\). No shortest path from \(q\) to \(v\) can contain \(\bar v_i\) in its interior. Hence, every shortest path from \(q\) to \(v\) reaches \(v\) via \(u\), i.e., the left case applies. Symmetrically, the right case applies if and only if \(q \ne u_i\) lies between \(\bar u_i\) and \(v\).  Consequently, only the middle case remains for all points on the sub-path from \(\bar v_i\) to \(\bar u_i\) and the claim follows. \qed
\end{proof}
Using Lemma\nobreakspace \ref {thm::parallelfirst::threecases}, we deal with the three cases as follows.

\subsubsection{Left Case and Right Case}

In the left case, every shortest path from \(q \in P_i\) to any point outside of \(P_i\) leaves \(P_i\) through \(u\). Hence, the farthest point from \(q\) on \(P_j\)  with \(j \ne i\) is the farthest point \(\bar u_j\) from \(u\) on \(P_j\). The distance from \(q\) to \(\bar q_j\) is \(d(q,\bar q_j) = d(q,u) + d(u,\bar u_j) = w_{qu} + \frac{w_1 + w_j}{2}\). 
On the other hand, the farthest point \(\bar q_i\) 
from \(q\) on \(P_i\) itself moves from \(\bar u_i\) to \(v\) as \(q\) moves from \(u\) to \(\bar v_i\) maintaining  a distance of \(d(q, \bar q_i) = \frac{w_1 + w_i}{2}\)
. Therefore, the farthest distance from \(q\) in \(N\) is 
\begin{align*}
	\ecc(q) &= \max\left[ \frac{w_1 + w_i}{2},\,  \max_{j \ne i} \left(w_{qu} + \frac{w_1 + w_j}{2} \right) \right] \\
	&=  \begin{dcases}  w_{qu} + \frac{w_1 + w_\p}{2}     & \text{if } i \ne \p \\  
					w_{qu} + \frac{w_1 + w_{\p-1}}{2}    & \text{if } i = \p \text{ and } \frac{w_\p - w_{\p-1}}{2} \le  w_{qu} \\
					\frac{w_1 + w_\p}{2}                 & \text{if } i = \p \text{ and } \frac{w_\p - w_{\p-1}}{2} \ge  w_{qu}
	   \end{dcases} \enspace .
\end{align*}
The first case means that, for queries from anywhere other than \(P_\p\), the farthest points lie on the \(u\)-\(v\)-paths of maximum length. The second and third case distinguish whether \(P_\p\) contains a farthest point for queries from \(P_\p\) itself. Accordingly, we answer a farthest point query from \(q \in P_i\) in the left case as follows.
\begin{itemize}
\item If \(i \ne \p\), we report all \(\bar u_j\) where \(w_j = w_\p\) and \(i \ne j\). \item If \(i = \p\) and \(\frac{w_\p - w_{\p-1}}{2} \le w_{qu}\), we report all  \(\bar u_j\) where \(w_j = w_{\p-1}\) and~\(j \ne \p\). \item If \(i = \p\) and \(\frac{w_\p - w_{\p-1}}{2} \ge  w_{qu}\), we report the farthest point \(\bar q_\p\) from \(q\) on \(P_\p\) using a binary search along the sub-path of \(P_\p\) from \(\bar u_\p\) to \(v\). \end{itemize}   The overlap between the last two cases covers the boundary case when a query from \(P_\p\) yields a farthest point on  \(P_\p\) itself and farthest points on other \(u\)-\(v\)-paths.

Swapping \(u\) and \(v\) above yields the procedure for the right case. Thus, answering farthest-point queries takes \(O(k+\log n)\) time in the left and right cases.

\subsubsection{Middle Case}

In the middle case, there are no farthest points from \(q\) on \(P_i\) itself and every path \(P_j\) with \(j \ne i\) contains points that we reach from \(q\) via \(u\) as well as points that we reach from \(q\) via \(v\). Let \(\bar q_j\) be the farthest point from \(q\) along the cycle formed by \(P_j\) and \(P_i\).  Since the distance from \(q\) to \(\bar q_j\) is \(d(q,\bar q_j) = \frac{w_i + w_j}{2}\), the farthest distance \(\ecc(q)\) from \(q\) in \(N\) is 
\[
	\ecc(q) = \max_{j \ne i} \left( \frac{w_i + w_j}{2} \right)
	=	\begin{dcases} 
			\frac{w_i 		+ w_\p}{2} 		& \text{if } i \ne \p \\
			\frac{w_\p 	+ w_{\p-1}}{2} 	& \text{if } i = \p  
		\end{dcases} \enspace .
\]

The first case applies for queries from anywhere other than \(P_\p\) who have their farthest points on the longest \(u\)-\(v\)-paths, i.e., on the paths \(P_j\) with \(w_j = w_\p\). The~second~case applies for queries from \(P_\p\) who have their farthest points on the second longest \(u\)-\(v\)-paths, i.e., on the paths \(P_j\) with \(w_j = w_{\p -1}\). Using binary search, we can answer a farthest point query from \(q \in P_i\) in the middle case by reporting the points~\(\bar q_j\) on those \(k\) paths \(P_j\) that contain farthest points from~\(q\).  To improve the resulting query time of \(O(k \log n)\), we take a closer look at the position of \(\bar q_j\) relative to \(\bar u_j\) and \(\bar v_j\). As illustrated in Fig.\nobreakspace \ref {fig::thm::parallelwalk}, the farthest point \(\bar q_j\) from \(q \in P_i\) along \(P_j\) moves from \(\bar u_j\) to \(\bar v_j\) as \(q\) moves from \(\bar v_i\) to \(\bar u_i\).

\begin{lemma} \label{thm::parallelwalk}
Let \(N\) be a parallel-path network with terminals \(u\) and \(v\). For any point \(x\in N\), let \(\bar x_i\) denote the farthest point from \(x\) along the \(u\)-\(v\)-path \(P_i\).
\begin{enumerate}[(i)]
\item The sub-path from \(\bar v_i\) to \(\bar u_i\) has length \(d(u,v)\).
\item For every point \(q\) along the sub-path from \(\bar v_i\) to \(\bar u_i\), the sub-path from \(\bar v_i\) to~\(q\) has the same length as the sub-path from \(\bar u_j\) to \(\bar q_j\) for any \(j\ne i\).%, i.e., \(w_{\bar v_i q} = w_{\bar u_j \bar q_j}\).
\end{enumerate} 
\end{lemma}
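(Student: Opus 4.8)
The plan is to parametrize each \(u\)-\(v\)-path by its arc length from \(u\) and to locate every named point explicitly, after which both claims reduce to one-line arithmetic. Write \(w_1 = d(u,v)\) for the length of a shortest \(u\)-\(v\)-path, and note that a point on a path at arc length \(t\) from \(u\) is reached from a terminal either directly along that path or by first paying \(w_1\) to switch terminals. Thus a point \(y \in P_i\) at arc length \(t\) satisfies \(d(u,y) = \min(t,\, w_1 + w_i - t)\) and \(d(v,y) = \min(w_i - t,\, w_1 + t)\), each the lower envelope of a rising and a falling linear function. Maximizing these over \(t \in [0,w_i]\) places \(\bar v_i\) at the balance point \(t = \frac{w_i - w_1}{2}\) and \(\bar u_i\) at \(t = \frac{w_i + w_1}{2}\), both at terminal distance \(\frac{w_1 + w_i}{2}\).

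Part (i) is then immediate: the sub-path from \(\bar v_i\) to \(\bar u_i\) has arc length \(\frac{w_i + w_1}{2} - \frac{w_i - w_1}{2} = w_1 = d(u,v)\).

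For part (ii), I would fix a query point \(q \in P_i\) at arc length \(t\) in the middle region, i.e.\ \(\frac{w_i - w_1}{2} \le t \le \frac{w_i + w_1}{2}\). In this range the rising branches win, so \(d(q,u) = t\) and \(d(q,v) = w_i - t\), and the sub-path from \(\bar v_i\) to \(q\) has length \(t - \frac{w_i - w_1}{2}\). For \(j \ne i\), a point on \(P_j\) at arc length \(s\) is reached from \(q\) at distance \(\min(d(q,u)+s,\, d(q,v)+w_j - s) = \min(t+s,\, w_i - t + w_j - s)\), so the farthest point \(\bar q_j\) sits at the balance \(s = \frac{w_i + w_j}{2} - t\), at distance \(\frac{w_i + w_j}{2}\). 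Since \(\bar u_j\) lies at arc length \(\frac{w_1 + w_j}{2}\), the sub-path from \(\bar u_j\) to \(\bar q_j\) has length \(\left|\frac{w_1 + w_j}{2} - \left(\frac{w_i + w_j}{2} - t\right)\right| = t - \frac{w_i - w_1}{2}\), matching the length of the sub-path from \(\bar v_i\) to \(q\).

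The computations are routine; the only points needing care are confirming that each farthest point is the balance point of the relevant min-of-two-lines rather than an endpoint, and that these balance points stay inside their paths throughout the middle region (which follows from \(w_1 \le w_j\) and the range of \(t\)). I would also keep the orientation of the sub-path from \(\bar u_j\) to \(\bar q_j\) consistent, so that the two distances agree in sign as well as magnitude. As a sanity check, evaluating the endpoints \(t = \frac{w_i - w_1}{2}\) and \(t = \frac{w_i + w_1}{2}\) recovers the stated behaviour that \(\bar q_j\) travels from \(\bar u_j\) to \(\bar v_j\) as \(q\) travels from \(\bar v_i\) to \(\bar u_i\).
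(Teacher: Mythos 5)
Your proof is correct. It reaches the same destination as the paper's proof but by a more explicitly computational route: you coordinatize every path by arc length from \(u\), express the terminal distances as lower envelopes of two linear functions, and read off \(\bar v_i\), \(\bar u_i\), and \(\bar q_j\) as the balance points \(t = \frac{w_i - w_1}{2}\), \(t = \frac{w_i + w_1}{2}\), and \(s = \frac{w_i + w_j}{2} - t\), after which both claims are one-line subtractions. The paper instead works coordinate-free: it derives the symmetry \(w_{u\bar v_i} = w_{v\bar u_i}\) from \(w_{u\bar u_i} = \frac{d(u,v)+w_i}{2} = w_{v\bar v_i}\) and, for part (ii), writes the distance \(d(q,\bar q_j)\) twice---once routed through \(u\) and once through \(v\)---and cancels terms using these identities. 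The underlying fact exploited is identical (a farthest point is where the two terminal routes tie), so the approaches are close cousins; what your version buys is that part (i) becomes immediate and the side conditions the paper leaves implicit (that each balance point actually lies in the interior of its path, which needs \(w_1 \le w_j\) and the stated range of \(t\)) become visible and checkable, at the cost of having to justify the distance formulas \(d(u,y)=\min(t,\,w_1+w_i-t)\) and \(d(q,y)=\min(d(q,u)+s,\,d(q,v)+w_j-s)\), which hold because every path into \(P_j\) enters through a terminal. You correctly flag both of these obligations, so nothing is missing.
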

\begin{proof} We have \(w_{u\bar v_i} = w_{v\bar u_i}\) for every path \(P_i\), since
\[
	w_{u\bar v_i} + w_{\bar v_i\bar u_i} = w_{u\bar u_i}  = \frac{d(u,v) + w_{i}}{2} = w_{v \bar v_i} = w_{v\bar u_i} + w_{\bar u_i \bar v_i} \enspace .
\]

The above implies the first claim, since adding a nutritious zero yields 
\[
w_{\bar v_i \bar u_i} = w_{\bar v_i \bar u_i} + (w_{u\bar v_i} - w_{v\bar u_i}) = d(u, \bar u_i) - w_{v\bar u_i} = d(u,v) + w_{v\bar u_i} - w_{v\bar u_i} = d(u,v)  \enspace .
\]

For the second claim, we compare the length of the shortest path from \(q\) to \(\bar q_j\) via \(v\) with the one via \(u\).  By studying Fig.\nobreakspace \ref {fig::thm::parallelwalk}, we obtain the following.%
\begin{align}
	d(q,\bar q_j) &= w_{q \bar v_i} + w_{\bar v_i u} + w_{u\bar v_j}  + w_{\bar v_j\bar u_j} - w_{\bar q_j \bar u_j} \label{eq::parallelwalk::1} \\
	d(q,\bar q_j) &= w_{\bar q_j \bar u_j} + w_{v\bar u_j}  + w_{v\bar u_i} + w_{\bar v_i\bar u_i} - w_{q \bar v_i} \label{eq::parallelwalk::2} 
\end{align}
We generate nutritious zeros from the identities \(w_{u \bar v_i} = w_{v \bar u_i}\), \(w_{u \bar v_j} = w_{v \bar u_j}\), and \(w_{\bar u_i \bar v_i} = d(u,v) = w_{\bar u_j\bar v_j}\). Rearranging terms yields the second claim, as
\begin{align*} 
 2 w_{q \bar v_i} &= 2 w_{q \bar v_i} + \overbrace{w_{\bar v_i u} - w_{v\bar u_i}}^{=0} +  \overbrace{w_{u\bar v_j} - w_{v\bar u_j} }^{=0} + \overbrace{w_{\bar v_j\bar u_j} -w_{\bar v_i\bar u_i} }^{=0} + 2 (\overbrace{ w_{\bar q_j \bar u_j} - w_{\bar q_j \bar u_j} }^{=0}) \\
  &\stackrel{\eqref{eq::parallelwalk::1}}= d(q,\bar q_j)  - \left[  w_{\bar q_j \bar u_j} + w_{v\bar u_j}  + w_{v\bar u_i} + w_{\bar v_i\bar u_i} - w_{q \bar v_i} \right]  + 2 w_{\bar q_j \bar u_j} \\ 
  &\stackrel{\eqref{eq::parallelwalk::2}}= d(q,\bar q_j)  - d(q,\bar q_j) +  2 w_{\bar q_j \bar u_j}  =  2 w_{\bar q_j \bar u_j} \enspace . \tag*{\qed}
\end{align*}
\end{proof}%

\begin{wrapfigure}[12]{l}{5cm}
	\centering %\vspace{-\baselineskip}
	\includegraphics[page=5]{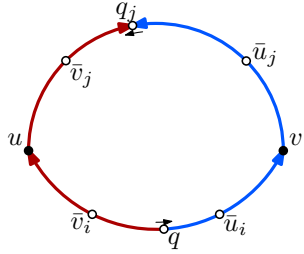} \vspace{-0.5\baselineskip}
	\caption{The positions of the points along the cycle \(P_i \cup P_j\) in Lemma\nobreakspace \ref {thm::parallelwalk}. \label{fig::thm::parallelwalk}}
\end{wrapfigure}
Using Lemma\nobreakspace \ref {thm::parallelwalk}, we interpret the searches for \(\bar q_j\) on the sub-path from \(\bar v_j\) to~\(\bar u_j\), as a single search \emph{with a common key} \(\bar q\) in \emph{multiple lists} (the \(\bar v_i\)-\(\bar u_i\)-sub-paths) of \emph{comparable search keys} (the vertices along these sub-paths). Using \(O(n)\) time, we construct a fractional cascading data structure~\cite{chazelle1986fractional} supporting predecessor queries on the sub-paths from \(\bar v_j\) to~\(\bar u_j\) for those paths \(P_j\) where \(w_j = w_{\p-1}\).\footnote{We consider paths of length \(w_{p-1}\) instead of \(w_p\), because we treat \(P_\p\) separately.} 

We answer a farthest-point query from \(q \in P_i\) as follows. If \(i \ne \p\), we locate and report \(\bar q_\p\) along \(P_\p\) in \(O(\log n)\) time. If \(i = \p\) or \(w_\p = w_{\p-1}\), the remaining farthest points from \(q\) are the \(\bar q_j\) where \(j \ne i\) and \(w_j = w_{\p-1}\); we report them in \(O(k + \log n)\) time using the fractional cascading data structure. This query might report a point on \(P_i\), which would be \(\bar q_i\) for queries from outside~\(P_i\). For queries from within \(P_i\), we omit this artifact.% from the output. 

%By combining the data structures for the three cases, we obtain the following result for farthest-point queries in parallel-path networks. The bound on the size of the data structure follows from the bound on its construction time. 

\begin{theorem} \label{thm::parallelfirst::twoterminal}
Let \(N\) be a parallel-path network with \(n\) vertices. There is a data structure with \(O(n)\) size and  \(O(n)\) construction time supporting \(O(k + \log n)\)-time farthest-point queries on \(N\), where \(k\) is the number of farthest points. 
\end{theorem}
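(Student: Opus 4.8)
The plan is to assemble the three-case analysis of Lemma~\ref{thm::parallelfirst::threecases} into a single preprocessing-and-query scheme and to verify that both the construction and the per-query work meet the claimed bounds. Since the left, middle, and right cases have each already been reduced to an $O(k+\log n)$-time procedure, what remains is to build, in $O(n)$ time and space, the auxiliary structures those procedures consult, and to dispatch an incoming query to the correct case in $O(\log n)$ time.

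First I would walk each path $P_i$ once, accumulating prefix distances from $u$ along $P_i$; because the paths are edge-disjoint and together have $n$ vertices, this costs $O(n)$ in total and yields each length $w_i$ directly. From these I extract the two largest lengths $w_\p$ and $w_{\p-1}$ (counted with multiplicity) together with the index sets of paths attaining each value, which a single linear scan that tracks the running top two finds without a full sort. The farthest points $\bar u_i$ and $\bar v_i$ sit at prefix distances $\tfrac{w_1+w_i}{2}$ from $u$ and from $v$, respectively, so one binary search in each prefix array (or one further linear pass) locates them and makes every $\bar v_i$-$\bar u_i$ sub-path explicit.

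For the middle case I would build the fractional cascading structure of~\cite{chazelle1986fractional} over the $\bar v_j$-$\bar u_j$ sub-paths of the paths with $w_j = w_{\p-1}$, taking the distance of a vertex from $\bar v_j$ as its key. Lemma~\ref{thm::parallelwalk}(ii) is exactly what legitimizes this: it shows that the single offset $w_{q\bar v_i}$ determines the target position $\bar q_j$ uniformly across all $j$, so the lists share one comparable key $\bar q$ and a common predecessor search returns all middle-case farthest points at once. Construction is linear in the total list length, hence $O(n)$, and the structure answers its predecessor query in $O(k+\log n)$ time.

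A query point $q$, given by its edge and $\lambda$, is first located on its path $P_i$; comparing its prefix distance against those of $\bar v_i$ and $\bar u_i$ classifies it via Lemma~\ref{thm::parallelfirst::threecases} in $O(\log n)$ time, after which I invoke the matching case procedure already described. The obstacle I anticipate is bookkeeping at the boundaries rather than any new idea: I must check that the overlap between the two $i=\p$ sub-cases of the left and right analyses, and the possibility that the middle-case predecessor search returns the spurious point $\bar q_i$ on the query's own path, are handled so that exactly the true farthest points are reported and $k$ is counted correctly. The one place where correctness, and not merely efficiency, leans on the earlier work is the global monotonicity of the fractional cascading keys, which again follows from Lemma~\ref{thm::parallelwalk}.
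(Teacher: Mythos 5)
Your proposal is correct and follows essentially the same route as the paper, whose justification of Theorem~\ref{thm::parallelfirst::twoterminal} is exactly the preceding development of Section~\ref{sec::paralellfirst}: the case classification of Lemma~\ref{thm::parallelfirst::threecases}, the explicit left/right-case formulas keyed to the two largest path lengths, and the fractional cascading structure over the \(\bar v_j\)-\(\bar u_j\) sub-paths legitimized by Lemma~\ref{thm::parallelwalk}, including the separate treatment of \(P_\p\) and the removal of the spurious point \(\bar q_i\). The only material you add is the explicit \(O(n)\) bookkeeping (prefix distances, locating \(\bar u_i\) and \(\bar v_i\), tracking the top two lengths with multiplicity), which the paper leaves implicit.
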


%\begin{remark} The number of farthest points in a parallel-path network with \(\p\) parallel paths ranges from \(1\) to \(\p\), i.e., one on every parallel path. \end{remark}

\section{Bead-Chain Networks} \label{sec::beadchains}

\begin{wrapfigure}[14]{r}{5cm}
	\centering \vspace{-0.5\baselineskip}
	\includegraphics{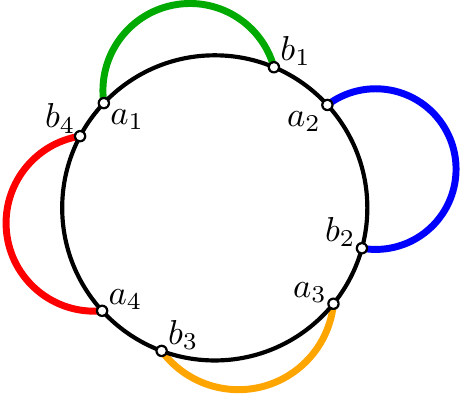} \vspace{-0.5\baselineskip}
	\caption{A bead-chain with four arcs (colored) around its cycle (black).\label{fig::beadchain}}
\end{wrapfigure}
A \emph{bead-chain network} consists of a main cycle with attached arcs so that each arc returns to the cycle before the next one begins. An example is depicted in Fig.\nobreakspace \ref {fig::beadchain}. Bead-chains are series-parallel networks where we first subdivide a cycle using series operations, then we apply at most one parallel operation to each edge of this cycle followed by series operations that further subdivide the arcs and cycle.  

Consider a bead-chain network \(N\) with main cycle \(C\) and arcs \(\alpha_1,\dots, \alpha_\s\). Let \(a_i\) and \(b_i\) be the vertices connecting \(C\) with the \(i\)-th arc. Without loss of generality, the path \(\beta_i\) from \(a_i\) to \(b_i\) along \(C\) is at most as long as \(\alpha_i\). Otherwise, we swap the roles of \(\alpha_i\) and  \(\beta_i\). 

We first study the shape of the function \(\hat d_i(x)\) that describes the farthest distance from points along the main cycle to any point on the \(i\)-th arc, i.e., 
\(
	\hat d_i(x) = \max_{y \in \alpha_i} d(x,y)
\). When considering only the \(i\)-th arc, we have a parallel-path network with three paths. Let \(\bar x\)  denote the farthest point from \(x \in C\) on \(C\) itself and let \(\hat x_i\) denote the farthest point from \(x\) on arc \(\alpha_i\). From the analysis in the previous section, we know that \(\bar d_i(x)\) has the shape depicted in Fig.\nobreakspace \ref {fig::arcdistance}. 

\begin{figure}[htb]
	\centering
	\vspace{-1\baselineskip}
	\subfloat[][The main cycle with \(\alpha_i\) (blue).]{\includegraphics[page=2]{beadchain-crop}\quad} 
	\subfloat[][The distance to \(\hat x_i\).]{\includegraphics[]{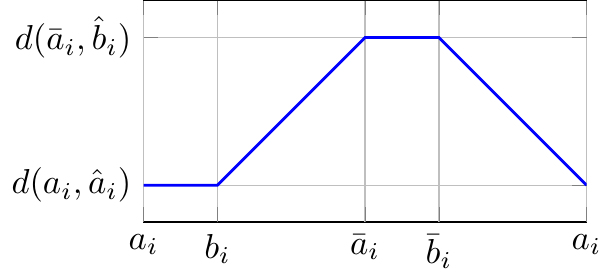} }
	
	\caption{The shape of the function \(\hat d_i(x)\) describing the distance from \(x\in C\) to the farthest point \(\hat x_i\) from \(x\) among the points on  the \(i\)-th arc \(\alpha_i\).\label{fig::arcdistance}} \vspace{-1\baselineskip}
\end{figure}

When walking along the main cycle, we encounter \(a_i\), \(b_i\), \(\bar a_i\), and \(\bar b_i\) in this order or its reverse. From \(a_i\) to \(b_i\), the point \(\hat x_i\) moves from \(\hat a_i\) to \(\hat b_i\) maintaining a constant distance. From \(b_i\) to \(\bar a_i\), the point \(\hat x_i\) stays at \(\hat b_i\) increasing in distance. From \(\bar a_i\) to \(\bar b_i\), the point \(\hat x_i\) moves from \(\hat b_i\) back to \(\hat a_i\), again, at a constant distance. Finally, \(\hat x_i\) stays at \(\hat a_i\) with decreasing distance when \(x\) moves from \(\bar b_i\) to \(a_i\). 

Since the farthest distance changes at the same rate when we move towards or away from the current farthest-point, the increasing and decreasing segments of any two functions \(\hat d_i\) and \(\hat d_j\) have the same slope except for their sign.

The height of the upper envelope \(\hat D\) of the functions \(\hat d_1,\dots, \hat d_\s\) at \(q \in C\) indicates the farthest distance from \(q\) to any point on the arcs and the \(i\)-th arc contains a farthest point from \(q\) when \(\hat d_i\) coincides with \(\hat D\) at \(q\). We construct \(\hat D\) in linear time using the shape of the functions \(\hat d_1,\dots, \hat d_\s\) described above. 

We need to consider an arc separately from the other arcs when it is \emph{too long}. We call an arc \(\alpha_i\) \emph{overlong} when the path \(\beta_i\) is longer then remainder \(\gamma_i\) of the main cycle. Figure\nobreakspace \ref {fig::overlong} illustrates an overlong arc \(\alpha_i\) with its function \(\hat d_i\).

\begin{figure}[htb]
	\centering%
	%\vspace{-1\baselineskip}%
	\subfloat[][An overlong arc (blue).]{\includegraphics[page=3]{beadchain-crop}} \quad
	\subfloat[][The farthest distance to an overlong arc.]{\includegraphics[]{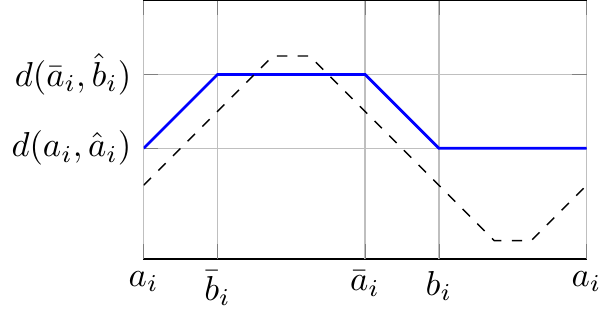} }
	
	\caption{An overlong arc \(\alpha_i\) (blue) in a bead-chain network where \(\beta_i\) (green) is longer then the remaining cycle \(\gamma_i\) (orange). The shape of \(\hat d_i\) is the same as for non-overlong arcs, but its high plateau may horizontally overlap with the high plateau of other arcs. \label{fig::overlong}}% \vspace{-1\baselineskip}
\end{figure}

\begin{lemma} \label{thm::overlong} Every bead-chain network \(N\) has at most one overlong arc and the functions \(\hat d_1,\dots,\hat d_\s\) of the remaining arcs \(\alpha_1,\dots, \alpha_\s\) satisfy the following.%
\begin{enumerate}[(i)]
\item The high plateaus of \(\hat d_1,\dots,\hat d_\s\) appear in the order as their arcs \(\alpha_1,\dots, \alpha_\s\) appear along the cycle and no two high plateaus overlap horizontally.

\item The low plateaus of \(\hat d_1,\dots,\hat d_\s\) appear in the order as their arcs \(\alpha_1,\dots, \alpha_\s\) appear along the cycle and no two low plateaus overlap horizontally.
\end{enumerate}
\end{lemma}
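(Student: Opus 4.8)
The plan is to reduce every assertion to how the cycle-segments $\beta_i$ sit on the main cycle $C$. Write $L$ for the weighted length of $C$, so that $\gamma_i$ has length $L - w(\beta_i)$ and $\alpha_i$ is overlong precisely when $w(\beta_i) > L - w(\beta_i)$, i.e. $w(\beta_i) > L/2$. The defining property of a bead-chain — each arc returns to the cycle before the next one begins — makes the segments $\beta_1, \dots, \beta_\s$ pairwise interior-disjoint and arranged around $C$ in the cyclic order of their arcs, so their lengths sum to at most $L$. Two distinct arcs therefore cannot both have $w(\beta_i) > L/2$, which proves that at most one arc is overlong.

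The common thread for both plateau claims is to pin down where the plateaus of $\hat d_i$ sit and how wide they are. From the shape recorded above (Fig.~\ref{fig::arcdistance}), the low plateau is traversed while $x$ runs from $a_i$ to $b_i$ and the high plateau while $x$ runs from $\bar a_i$ to $\bar b_i$, where $\bar a_i, \bar b_i$ are the farthest points from $a_i, b_i$ on $C$. Because the WLOG choice guarantees $w(\beta_i) \le w(\alpha_i)$ for every arc, no arc is ever a shortcut across its own endpoints, so network distances between points of $C$ agree with distances measured inside $C$; in particular $\bar a_i$ and $\bar b_i$ are the genuine cycle-antipodes, obtained from $a_i$ and $b_i$ by the shift along $C$ by $L/2$.

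For a non-overlong arc we have $w(\beta_i) \le w(\gamma_i)$, so $\beta_i$ is the shorter of the two $a_i$--$b_i$ connections along $C$; hence the low plateau occupies exactly $\beta_i$, and the high plateau occupies exactly the shift $\beta_i + L/2$ of $\beta_i$ by $L/2$. The $\beta_i$ are interior-disjoint and cyclically ordered by their arcs, so the low plateaus are interior-disjoint and in that order, giving assertion~(ii); and since the shift by $L/2$ is a length-preserving bijection of $C$ that preserves disjointness and cyclic order, the high plateaus $\beta_i + L/2$ are likewise interior-disjoint and in that order, giving assertion~(i).

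The point that needs the most care — and where excluding the overlong arc is essential — is the claim that the low plateau sits over $\beta_i$ rather than over the complementary segment $\gamma_i$. This uses $w(\beta_i) \le L/2$, which holds exactly for non-overlong arcs. For the (unique) overlong arc the shorter $a_i$--$b_i$ connection is instead $\gamma_i$, and since $w(\beta_i) > L/2$ forces $\gamma_i$ to contain the segments $\beta_j$ of all the other arcs, its low and high plateaus would swallow those of the remaining arcs — exactly the horizontal overlap the overlong case is set aside to avoid. I would make this dichotomy explicit and then apply the shift-by-$L/2$ argument uniformly to the non-overlong arcs.
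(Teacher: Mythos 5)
Your proof is correct and follows essentially the same route as the paper: both arguments rest on the pairwise disjointness and cyclic ordering of the segments \(\beta_i\), locate the low plateau of \(\hat d_i\) on \(\beta_i\) and the high plateau on its antipodal image, and derive the uniqueness of the overlong arc from the fact that two disjoint sub-arcs of the cycle cannot both exceed half its length. Your explicit use of the shift by \(L/2\) as an order- and length-preserving bijection makes the ordering of the high plateaus slightly more transparent than the paper's phrasing, but it is the same underlying argument.
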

\begin{proof}
The existence of two overlong arcs \(\alpha_i\) and \(\alpha_k\) leads to the contradiction \(
	w_{\beta_i} > w_{\gamma_i} > w_{\beta_k} > w_{\gamma_k} > w_{\beta_i} 
\), since \(\gamma_i\) contains \(\beta_k\) and \(\gamma_k\) contains \(\beta_i\). 

Let \(\alpha_1,\dots, \alpha_\s\) be the non-overlong arcs of~\(N\) as they appear along the cycle. For every arc \(\alpha_i\), with \(i =1,\dots,\s\), the farthest points \(\bar a_i\) and \(\bar b_i\) from the endpoints \(a_i\) and \(b_i\) of \(\alpha_i\) appear along \(\gamma_i\).  Therefore, the points \(\bar a_1,\bar b_1,\bar a_2, \bar b_2,\dots, \bar a_\s\), and \(\bar b_\s\) appear in this order along the cycle \(C\). Claim (i) follows, since the high plateau of \(\hat d_i\) lies between \(\bar a_i\) and \(\bar b_i\). Claim (ii) follows, since any non-overlong arc \(\alpha_i\) has its low plateau along \(\beta_i\) and since \(\beta_1,\dots,\beta_\s\) are separate by definition. \qed
\end{proof}

As suggested by Lemma\nobreakspace \ref {thm::overlong}, we incrementally construct the upper envelope of the functions \(\hat d_i\) corresponding to non-overlong arcs and treat a potential overlong arc separately. When performing a farthest-point query from the cycle, we first determine the farthest distance to the overlong arc and the farthest distance to all other arcs. Depending on the answer we report farthest points accordingly. 

\begin{lemma} \label{thm::envelope} Let \(\alpha_1,\dots,\alpha_s\) be the arcs of a bead-chain that has no overlong arc. Computing the upper envelope \(\hat D\) of \(\hat d_1,\dots, \hat d_\s\) takes \(O(\s)\) time.
\end{lemma}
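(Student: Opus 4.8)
The plan is to exploit the rigid shape of each \(\hat d_i\) together with the ordering guaranteed by Lemma~\ref{thm::overlong}. Recall that, when traversing the main cycle, each \(\hat d_i\) is a unimodal ``bump'': a low plateau along \(\beta_i\), an ascending segment of slope \(+1\) from \(b_i\) to \(\bar a_i\), a high plateau along \(\gamma_i\) from \(\bar a_i\) to \(\bar b_i\), and a descending segment of slope \(-1\) from \(\bar b_i\) back to \(a_i\). Since \(\hat D = \max_i \hat d_i\) locally coincides with one of the \(\hat d_i\), it is itself piecewise linear with slopes in \(\{-1,0,+1\}\); every flat piece of \(\hat D\) lies on some high or low plateau, every ascending piece lies on some \(\hat d_i\)'s ascent, and every descending piece lies on some \(\hat d_j\)'s descent. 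The lemma asks for \(O(\s)\) time, which implicitly demands that \(\hat D\) also have \(O(\s)\) pieces, so I would prove both facts at once through the construction.

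The main approach I would take is incremental insertion in the cyclic order of the plateaus. By Lemma~\ref{thm::overlong} the high plateaus, and likewise the low plateaus, already occur in cyclic arc order, so the breakpoints \(a_i,b_i,\bar a_i,\bar b_i\) can be read off in sorted cyclic order in \(O(\s)\) time straight from the bead-chain's structure, with no sorting. I would then insert the bumps \(\hat d_1,\dots,\hat d_\s\) in this order while maintaining \(\hat D\) as a stack of pieces in the manner of a Graham scan. The key claim is that the stack behaves monotonically: because the high \emph{and} low plateaus of the new bump lie after all previously inserted plateaus, inserting \(\hat d_i\) affects only a contiguous suffix of the current envelope, so it appends \(O(1)\) new pieces (a partial ascent, a high plateau, a descent, a low plateau) and pops a possibly long run from the end. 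As each piece is pushed at most once and popped at most once, the total work is \(O(\s)\), which simultaneously certifies that \(\hat D\) has \(O(\s)\) pieces.

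To justify the stack behavior I would lean on two structural facts. First, all ascending edges are parallel (slope \(+1\)) and all descending edges are parallel (slope \(-1\)); hence two ascents can never cross and two descents can never cross, which confines the new bump's interaction with the existing envelope to its right end and prevents a popped piece from ever re-surfacing. Second, the disjoint, ordered plateaus of Lemma~\ref{thm::overlong} guarantee that the new high and low plateaus do not interleave with the old ones. A supporting observation is that \(\hat D\) can acquire a new local maximum only on a high plateau: at an interior peak the envelope would rise along some ascending \(\hat d_i\) and then fall along some descending \(\hat d_j\), but an ascent that does not terminate at \(\bar a_i\) keeps rising with slope \(+1\), so just past the putative peak \(\hat d_i\) would still dominate the descending \(\hat d_j\) --- a contradiction unless the peak sits at the start of a high plateau. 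This rules out new peaks to the left of the inserted bump and underpins the contiguous-suffix claim.

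I expect the main obstacle to be turning this contiguous-suffix/monotone-stack property into a rigorous amortized argument, together with handling the fact that \(\hat D\) lives on a circle rather than an interval. A tall, late-arriving arc may pop a long run of pieces in one step, so I would need the global amortization (each piece pushed and popped once) rather than a per-step bound, and I would need to certify carefully that such pops only remove a right suffix and never force reinsertion into the interior. For the wrap-around I would cut the cycle at a point certain to lie on \(\hat D\) --- for instance inside the globally highest high plateau --- and make one final pass to reconcile the two ends of the stack.
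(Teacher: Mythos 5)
Your overall strategy---inserting the arc-distance functions one at a time in the cyclic order guaranteed by Lemma~\ref{thm::overlong}, and charging the work to envelope pieces that are each created once and destroyed at most once---is the same amortized incremental construction the paper uses. Your parallel-slopes observation (two ascents, or two descents, never cross, so once the new function drops below the current envelope it stays below) is likewise the paper's justification that the newly inserted material occupies a single contiguous window around its high plateau and that each insertion is a local two-sided walk from that plateau.

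The gap is in your treatment of the low plateaus. You describe each inserted bump as contributing ``a partial ascent, a high plateau, a descent, a low plateau'' as consecutive pieces at the end of the stack. They are not consecutive there: on the cycle the low plateau of \(\hat d_i\) lies along \(\beta_i\), roughly antipodal to its high plateau along \(\gamma_i\), so in any linearization in which the high plateaus appear left to right the low plateau of \(\hat d_i\) falls in the \emph{middle} of the already-built envelope, not after it. Worse, once the low plateau is included, the region where \(\hat d_i\) dominates the partial envelope need not be connected: the descent stops falling at \(a_i\) and levels off, so \(\hat d_i\) can resurface above the envelope far from its high plateau (precisely when \(\alpha_i\) is long and the other functions dip low near \(\beta_i\)). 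This breaks both the contiguous-suffix claim and the push-once/pop-once accounting. The paper sidesteps exactly this with a two-pass scheme: in the first pass each low plateau is replaced by extending the adjacent ascent and descent, so every domination region is a single interval and the amortized insertion goes through; in a second \(O(\s)\)-time sweep the low plateaus---which by Lemma~\ref{thm::overlong} are pairwise disjoint and appear in arc order---are merged back in. If you defer the low plateaus in this way, the rest of your argument (including cutting the cycle at a point known to lie on \(\hat D\) and reconciling the two ends, which the paper leaves implicit) goes through.
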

\begin{proof} We proceed in two passes: in the first pass, we consider only the high plateaus and non-constant segments of \(\hat d_1,\dots, \hat d_\s\); the respective low plateaus are replaced by extending the corresponding non-constant segments. In the second pass, we traverse the partial upper envelope from the first pass again and compare it with the previously omitted low plateaus, thereby constructing \(\hat D\). %The two passes are illustrated in Fig.\nobreakspace \ref {fig::firstpass}.

\begin{figure}[hbt]
	\centering
	\includegraphics[scale=0.9]{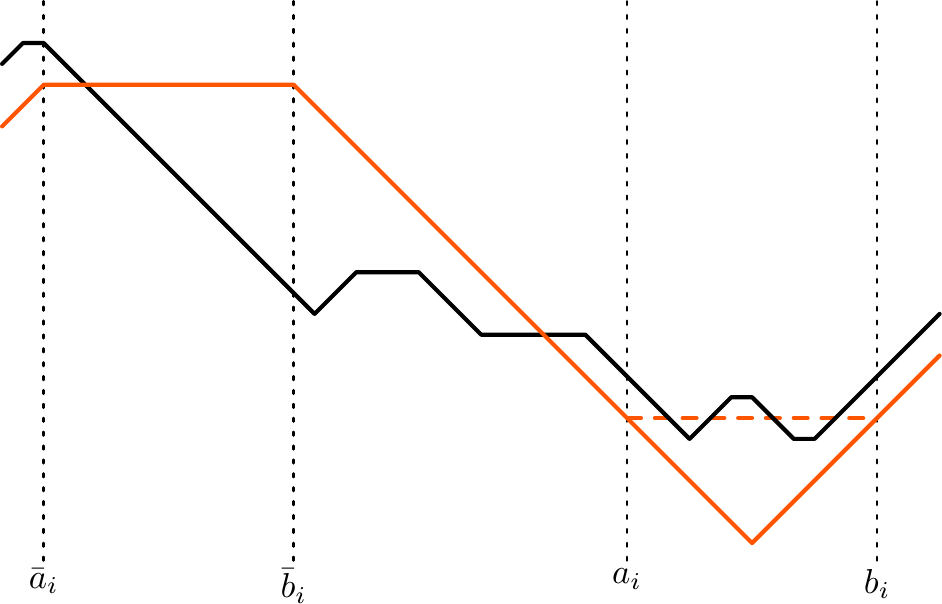}
	\caption{An incremental step where we construct \(\hat D_i'\) from \(\hat D_{i-1}'\) (black) and \(\hat d_i'\) (orange). The treatment of the low plateau of \(\hat d_i\) (dashed) is deferred to the second pass. \label{fig::firstpass}} 
\end{figure}

Let \(\hat d_i'\) be the function resulting from replacing the low plateau of \(\hat d_i\) by extending its non-constant segments and let \(\hat D_{i}'\) be the upper envelope of \(\hat d_1',\dots, \hat d_i'\). In the first pass, we construct \(\hat D_\s'\) incrementally. Assume we have \(\hat D_{i-1}'\) and would like to obtain \(\hat D_i'\) by inserting \(\hat d_i'\) into \(\hat D_{i-1}'\), as illustrated in Fig.\nobreakspace \ref {fig::firstpass}. We perform this insertion by walking from \(\bar a_i\)---the left endpoint of the high plateau of \(\hat d_i\)---in both directions updating the current upper envelope \(\hat D_{i-1}\). Locating \(\bar a_i\) takes constant time, since \(\bar a_i\) is the first bending point to the right of \(\bar b_{i-1}\).

There is no more than one increasing segment of \(\hat D_{i-1}'\) between \(\bar a_i\) and \(\bar b_i\). Assume, for a contradiction that there are two increasing segments \(s_1\) and \(s_2\) between \(\bar a_i\) and \(\bar b_i\). Neither of them has their higher endpoint between \(\bar a_i\) and~\(\bar b_i\), since there would be two horizontally overlapping high plateaus, otherwise. Since \(s_1\) and \(s_2\) have the same slope, only one of them can be part of the upper envelope.\footnote{When the segments \(s_1\) and \(s_2\) happen to overlap, we consider only the segment of the arc distance function that was inserted first to be part of the upper envelope.} Analogously, the same holds for decreasing segments. Therefore, inserting the high plateau of \(\hat d_i\) into the upper envelope \(\hat D_{i-1}'\) takes constant time. 

If the decreasing segment of \(\hat d_i'\) appears along \(\hat D_i'\) at all, then it appears at~\(\bar b_i\). We update the previous upper envelope \(\hat D_{i-1}'\) by walking from \(\bar b_i\) towards \(a_i\) until the decreasing segment of \(\hat d_i'\) vanishes below \(\hat D_{i-1}'\). We charge the costs for this walk to the segments that are removed from the previous upper envelope \(\hat D_{i-1}'\). We proceed in the same fashion with the increasing segment of \(\hat d_i'\) by walking from \(\bar a_i\) towards \(b_i\). Each non-constant segment appears at most once along any intermediate upper envelope and is never considered again after its removal. Therefore, the total cost for inserting all non-constant segments---and, thus, the total cost of constructing \(\hat D_\s'\)---amounts to \(O(\s)\).

In the second pass, we construct the desired upper envelope \(\hat D\) from \(\hat D_\s'\). Since no two low plateaus overlap horizontally, we simply walk along \(\hat D_\s'\) comparing its height to the height of the current low plateau, if any. This takes \(O(s)\) time, since \(\hat D_\s'\) has \(O(\s)\) bending points and since there are \(\s\) low plateaus.  \qed
\end{proof}

To answer a farthest-point query from \(q\in C\), we need to find its farthest arcs, i.e., the arcs containing farthest points from \(q\). Suppose each point along the main cycle~\(C\) has exactly one farthest arc. Then we could subdivide \(C\) depending on which arc is farthest and answer a farthest-arc query by identifying the function among \(\hat d_1,\dots, \hat d_\s\) that defines the upper envelope \(\hat D\) on the sub-edge containing~\(q\). On the other hand, there could be multiple farthest arcs when several functions among \(\hat d_1,\dots, \hat d_\s\) have overlapping increasing or decreasing segments. In~this~case, we could store the at most two farthest arcs from plateaus directly with the corresponding segments of~\(\hat D\). However, storing the farthest arcs from increasing and decreasing segments directly would lead to a quadratic construction time. Instead, we rely on the following observation. An arc \(\alpha\) is considered \emph{relevant} when there exists some point \(x \in C\) such that \(\alpha\) is a farthest arc for \(x\) and \(\alpha\) is considered \emph{irrelevant} when there is no such point on the main cycle.

\begin{lemma}  \label{thm::aux::nogap}
Let \(\alpha_i\), \(\alpha_j\), and \(\alpha_k\) be arcs that appear in this order in a bead-chain without overlong arc. The arc \(\alpha_j\) is irrelevant when \(\alpha_i\) and \(\alpha_k\) are farthest arcs from some query point \(q\) such that \(\hat d_i\) and \(\hat d_k\) are both decreasing/increasing at \(q\).
\end{lemma}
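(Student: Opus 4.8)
The plan is to reduce the claim to a single scalar inequality read off at the query point $q$, and then to promote that local inequality into domination of $\hat d_j$ by one of its neighbours over the \emph{entire} cycle. I treat the ``both decreasing'' case; the ``both increasing'' case is the mirror image (swap the roles of $a_m,b_m$ and of $\alpha_i,\alpha_k$, after which $\alpha_k$ takes over the role of the dominating arc).

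First I would record a closed form for the arc-distance functions. A point of $\alpha_m$ at arc-length $t$ from $a_m$ can be reached from $x\in C$ only through $a_m$ or through $b_m$, so $d(x,y)=\min\{d(x,a_m)+t,\;d(x,b_m)+w_{\alpha_m}-t\}$; maximising over $t$ (the optimum is interior since $\lvert d(x,a_m)-d(x,b_m)\rvert\le d(a_m,b_m)=w_{\beta_m}\le w_{\alpha_m}$, using that $\alpha_m$ is non-overlong) gives
\[
\hat d_m(x)=\tfrac12\bigl(d(x,a_m)+d(x,b_m)+w_{\alpha_m}\bigr)\qquad\text{for all }x\in C.
\]
On $\beta_m$ this is the low-plateau height $\ell_m:=\tfrac12(w_{\alpha_m}+w_{\beta_m})$, and on the decreasing ramp, where $b_m$ is reached through $a_m$, it collapses to $\hat d_m(x)=\ell_m+d(x,a_m)$.

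Second, I would locate $q$. Since $\hat d_i$ and $\hat d_k$ are decreasing at $q$, the point $q$ lies on the decreasing ramps from $\bar b_i$ to $a_i$ and from $\bar b_k$ to $a_k$; because $\alpha_i,\alpha_j,\alpha_k$ occur in this order, Lemma~\ref{thm::overlong} yields $\bar b_i\le\bar b_j\le\bar b_k<q<a_i<a_j$, so $q$ lies on $\alpha_j$'s decreasing ramp as well. This is exactly where the hypothesis on $\alpha_k$ is spent: the arc beyond $\alpha_j$ drives $q$ past $\bar b_j$, making the simplified linear formula valid for $\hat d_j$ too. Evaluating it and using that $\alpha_i$ is a farthest arc (so $\hat d_j(q)\le\hat d_i(q)$) gives the scalar inequality
\[
\ell_i-\ell_j\ \ge\ d(q,a_j)-d(q,a_i)\ =\ d(a_i,a_j).
\]

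Third, I would globalise. Subtracting the closed forms of $\hat d_i$ and $\hat d_j$ and bounding $d(x,a_i)-d(x,a_j)\ge-d(a_i,a_j)$ and $d(x,b_i)-d(x,b_j)\ge-d(b_i,b_j)$ by the reverse triangle inequality, then substituting $d(b_i,b_j)=d(a_i,a_j)+w_{\beta_j}-w_{\beta_i}$ (so the $w_\beta$-terms cancel), leaves
\[
\hat d_i(x)-\hat d_j(x)\ \ge\ (\ell_i-\ell_j)-d(a_i,a_j)\ \ge\ 0\qquad\text{for all }x\in C,
\]
using the inequality from the second step. Thus $\hat d_i$ dominates $\hat d_j$ everywhere, so $\alpha_j$ never attains the envelope ahead of $\alpha_i$ and is therefore irrelevant. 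I expect the second step to be the real obstacle: the geometric bookkeeping that forces $q$ onto all three decreasing ramps simultaneously is what makes the merely local comparison at $q$ strong enough to propagate to the whole cycle in the third step, where the triangle-inequality estimate then does the work with no case analysis.
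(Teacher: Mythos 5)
Your proof is correct and reaches the same quantitative conclusion as the paper---namely \(\hat d_i(x)-\hat d_j(x)\ge \hat d_i(q)-\hat d_j(q)\) for all \(x\in C\)---but it gets there by a different mechanism. The paper walks once around the cycle and tracks \(\Delta(x)=\hat d_i(x)-\hat d_j(x)\) segment by segment, recording eight identities such as \(\Delta(b_i)=\Delta(a_i)+d(a_i,b_i)\) and cancelling them in pairs via the symmetries \(d(a_i,b_i)=d(\bar a_i,\bar b_i)\), \(d(b_i,a_j)=d(\bar b_i,\bar a_j)\), \(d(a_j,b_j)=d(\bar a_j,\bar b_j)\). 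You replace this piecewise bookkeeping by the closed form \(\hat d_m(x)=\tfrac12\bigl(d(x,a_m)+d(x,b_m)+w_{\alpha_m}\bigr)\), valid on all of \(C\) precisely because \(\alpha_m\) is non-overlong, plus the reverse triangle inequality; this is shorter, avoids enumerating bending points, and makes the propagation step essentially automatic. The localisation of \(q\) between \(\bar b_k\) and \(a_i\), hence on \(\alpha_j\)'s decreasing ramp, is identical in both proofs and is indeed where the hypothesis on \(\alpha_k\) is spent. Two small points to tighten: first, the identity \(d(b_i,b_j)=d(a_i,a_j)+w_{\beta_j}-w_{\beta_i}\) need only be the inequality \(d(b_i,b_j)\le d(a_i,a_j)+w_{\beta_j}-w_{\beta_i}\), which follows from \(d(b_i,b_j)\le d(b_i,a_j)+w_{\beta_j}\) together with \(d(a_i,a_j)=w_{\beta_i}+d(b_i,a_j)\), and that is exactly the direction your estimate requires. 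Second, your conclusion is only the weak inequality \(\hat d_i\ge\hat d_j\), so you obtain irrelevance of \(\alpha_j\) only under the additional assumption that \(\alpha_j\) is not itself a farthest arc at \(q\); the paper makes the same implicit assumption when it asserts \(\Delta(q)>0\), and this is the form in which the lemma is actually used to derive Corollary\nobreakspace\ref{thm::nogaps}, so it is a shared imprecision rather than a gap in your argument.
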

\begin{proof}
As illustrated in Fig.\nobreakspace \ref {fig::nogaps}, \(q\) lies between \(a_i\) and \(\bar b_k\), since \(\hat d_i\) and \(\hat d_k\) are both decreasing at \(q\). Hence, \(q\) lies between \(a_j\) and \(\bar b_j\), i.e., \(\hat d_j\) is decreasing at \(q\), as~well. We show \(\hat d_j(x) < \hat d_i(x)\) for all \(x\in C\),  which implies that \(\alpha_j\) is irrelevant.%
\begin{figure}[htb]
	\centering%
	\vspace{-1\baselineskip}
	\subfloat[][The three arcs.]{\includegraphics[scale=1.2, page=4]{beadchain-crop}} \quad 
	\subfloat[][The comparison of \(\hat d_i\) (blue) and \(\hat d_j\) (green).]{\includegraphics[]{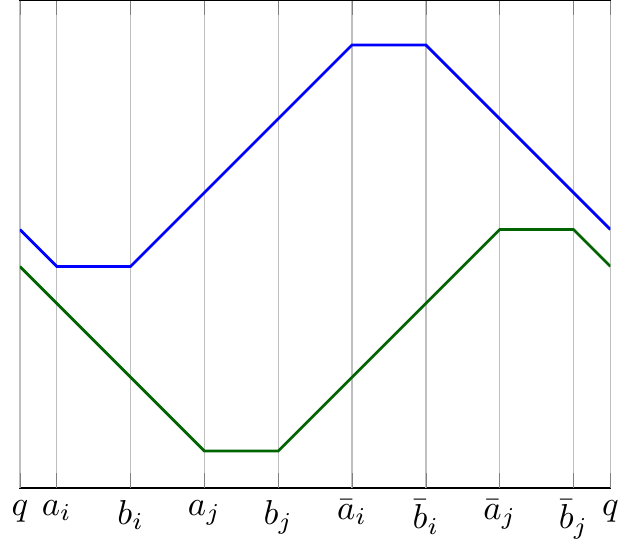} }
	
	\caption{The constellation from Lemma\nobreakspace \ref {thm::aux::nogap}, where \(\alpha_i\) and \(\alpha_k\) are farthest arcs from \(q\) where the farthest distances decreases as \(q\) moves in clockwise direction~(a). A comparison of the arc distance functions \(\hat d_i\) and \(\hat d_j\) reveals that \(\alpha_j\) is irrelevant in this case~(b).   \label{fig::nogaps}} \vspace{-1\baselineskip}
\end{figure}

Consider the difference \(\Delta(x) \coloneqq \hat d_i(x) - \hat d_j(x)\). We have \(\Delta(a_i) = \Delta(q)>0\), as \(\hat d_i\) and \(\hat d_j\) are decreasing  with the same slope from \(q\) to \(a_i\). We observe that \(\Delta(b_i) = \Delta(a_i) + d(a_i,b_i)\), as \(\hat d_i\) remains constant from \(a_i\) to \(b_i\) while \(\hat d_j\) decreases. We have \(\Delta(a_j) = \Delta(b_i) + 2 d(b_i,a_j)\), as  \(\hat d_i\) increases from \(b_i\) to \(a_j\) while \(\hat d_j\) decreases. We continue in this fashion and obtain the following description of \(\Delta\). 
\begin{align*}
	\Delta(a_i) &= \Delta(q) 				& \Delta(\bar a_i) &= \Delta(b_j)  \\
	\Delta(b_i) &= \Delta(a_i) + d(a_i,b_i)		& \Delta(\bar b_i) &= \Delta(\bar a_i) - d(\bar a_i, \bar b_i)\\
	\Delta(a_j) &= \Delta(b_i) + 2d(b_i,a_j)	& \Delta(\bar a_j) &= \Delta(\bar b_i) - 2 d(\bar b_i, \bar a_j)\\ 
	\Delta(b_j) &= \Delta(a_j) + d(a_j,b_j)	& \Delta(\bar b_j) &= \Delta(\bar a_j) - d(\bar a_j, \bar b_j)  
\end{align*}
The claim follows as the above implies \(\Delta(x) \ge \Delta(q) > 0\) for all \(x \in C\), due to the symmetries \(d(a_i,b_i) = d(\bar a_i,\bar b_i)\), \(d(b_i,a_j) = d(\bar b_i,\bar a_j)\), and \(d(a_j,b_j) = d(\bar a_j, \bar b_j)\). \qed
\end{proof}

\begin{corollary}  \label{thm::nogaps}
Let \(q\) be a point on the cycle of a bead-chain with no overlong arc. The farthest arcs from \(q\) that correspond to decreasing/increasing segments of~\(\hat D\) form one consecutive sub-list of the circular list of relevant arcs. 
\end{corollary}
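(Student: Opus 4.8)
The plan is to deduce the corollary almost immediately from Lemma~\ref{thm::aux::nogap}, with the only real work being the cyclic bookkeeping. First I would record two simple facts about a decreasing farthest arc \(\alpha_i\) from \(q\): it is \emph{relevant} (since \(\alpha_i\) is by assumption a farthest arc for the specific point \(q\)), and \(q\) lies on its decreasing segment, i.e., on the sub-path of \(C\) from \(\bar b_i\) to \(a_i\). Orienting \(C\) so that the farthest distance decreases as \(x\) moves clockwise, this places each such \(a_i\) strictly clockwise of \(q\); since the decreasing segment is a proper sub-path of \(\gamma_i\) rather than the whole cycle, every decreasing farthest arc sits in the common clockwise stretch of \(C\) emanating from \(q\) without wrapping past it. Consequently the decreasing farthest arcs inherit a genuine linear (non-wrapping) order \(\delta_1,\dots,\delta_m\) from their clockwise positions.

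The central step is then to rule out relevant arcs strictly between consecutive \(\delta_t\) and \(\delta_{t+1}\). Both \(\hat d_{\delta_t}\) and \(\hat d_{\delta_{t+1}}\) are decreasing at \(q\), and \(\delta_t,\delta_{t+1}\) appear in this clockwise order, so any arc \(\alpha_j\) lying between them---i.e.\ with \(\delta_t,\alpha_j,\delta_{t+1}\) appearing in this order---satisfies exactly the hypotheses of Lemma~\ref{thm::aux::nogap} with \(i=\delta_t\) and \(k=\delta_{t+1}\). The lemma forces such an \(\alpha_j\) to be irrelevant, contradicting its presence in the relevant list. Hence no relevant arc lies strictly between \(\delta_t\) and \(\delta_{t+1}\), which is to say that \(\delta_t\) and \(\delta_{t+1}\) are adjacent in the circular list of relevant arcs. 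Chaining this over \(t=1,\dots,m-1\) shows that \(\delta_1,\dots,\delta_m\) occupy one contiguous block of the relevant list, proving the claim for decreasing segments; the increasing case is identical after reversing the orientation of \(C\), which turns increasing segments into decreasing ones and invokes the symmetric ``increasing'' version of Lemma~\ref{thm::aux::nogap}.

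The step I expect to be the main obstacle is making this cyclic argument airtight, rather than the (essentially immediate) application of the lemma. Concretely, I must verify that ``between \(\delta_t\) and \(\delta_{t+1}\)'' always refers to the side to which Lemma~\ref{thm::aux::nogap} actually applies, and that the collection of decreasing farthest arcs does not secretly wrap all the way around \(C\)---which is precisely what the observation ``\(q\) lies on the decreasing segment \((\bar b_i,a_i)\), so \(a_i\) is clockwise of \(q\) within a single revolution'' is designed to guarantee. Once the linear, non-wrapping order of the \(\delta_t\) is established, consecutiveness within the circular relevant list follows, and the symmetric treatment of increasing segments completes the proof.
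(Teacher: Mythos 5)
Your proposal is correct and matches the paper's (implicit) argument: the paper states this as an immediate consequence of Lemma~\ref{thm::aux::nogap} with no written proof, and deriving it by applying that lemma to consecutive pairs of decreasing (resp.\ increasing) farthest arcs to rule out relevant arcs in the gaps between them is exactly the intended route. Your added care about cutting the cycle at \(q\) so that the decreasing farthest arcs acquire a non-wrapping linear order---ensuring the lemma is applied only to the \(m-1\) interior gaps and not to the complementary gap---is a sensible elaboration of a detail the paper leaves implicit.
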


Using Corollary\nobreakspace \ref {thm::nogaps}, we answer farthest-arc queries from the main cycle of a bead-chain network without overlong arc as follows. When \(\hat D\) has a plateau at the query point \(q\), we report the at most two farthest arcs stored with this plateau. When \(\hat D\) has an increasing/decreasing segment at \(q\), we first report the farthest arc \(\alpha\) that is stored directly with this segment. We report the remaining farthest arcs by cycling through the circular list of relevant arcs starting from \(\alpha\) in both directions until we reach a relevant arc that is no longer farthest from \(q\).

\begin{theorem} \label{thm::beadchains}
Let \(N\) be a bead-chain network with \(n\) vertices. There is a data structure with \(O(n)\) size and  \(O(n)\) construction time supporting \(O(k + \log n)\)-time farthest-point queries on \(N\), where \(k\) is the number of farthest points. 
\end{theorem}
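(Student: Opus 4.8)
The plan is to assemble the ingredients developed above into two query procedures---one for query points on the main cycle \(C\) and one for query points on an arc \(\alpha_i\)---and then to bound their combined resources.

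First I would finalize the cycle queries. For \(q \in C\), the farthest point in \(N\) lies either on some arc or on the cycle itself. The farthest point on \(C\) is the antipode of \(q\), at the constant distance equal to half the length of \(C\) (shortest paths between cycle points never use an arc, because \(\beta_i\) is the shorter side of each bead). The farthest points on arcs are captured by the upper envelope \(\hat D\) of the non-overlong arc functions, which Lemma \ref{thm::envelope} builds in \(O(\s)\) time, together with the single overlong arc of Lemma \ref{thm::overlong} handled by its own function. I would store with each piece of \(\hat D\) the at most two farthest arcs of a plateau and one representative farthest arc of each increasing/decreasing segment, and keep the relevant arcs in a circular list. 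A farthest-distance query then reduces to point location on the subdivision of \(C\) induced by the breakpoints of \(\hat D\), the overlong-arc function, and the constant cycle-antipode level, costing \(O(\log n)\). To report the \(k\) farthest points I would start from the stored representative and, following Corollary \ref{thm::nogaps}, sweep the circular list of relevant arcs in both directions while they remain farthest; locating the actual farthest point on each reported arc proceeds exactly as in the middle case of the parallel-path analysis, so fractional cascading over the arc lists yields \(O(k + \log n)\) total, mirroring Theorem \ref{thm::parallelfirst::twoterminal}.

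Next I would handle queries from a point \(q\) on an arc \(\alpha_i\). Since \(N\) connects to \(\alpha_i\) only through the cycle vertices \(a_i\) and \(b_i\), the farthest point splits into two contributions. The first is the farthest point within the bead formed by \(\alpha_i\) and \(\beta_i\); for a non-overlong arc this is a two-path parallel-path network whose farthest point and distance I read off in \(O(\log n)\) via Theorem \ref{thm::parallelfirst::twoterminal}. The second is the farthest point outside the bead, which every shortest path from \(q\) reaches through \(a_i\) or \(b_i\); its distance is therefore governed by \(d(q,a_i)\), \(d(q,b_i)\), and the farthest distances \(f_{a_i}, f_{b_i}\) from \(a_i\) and \(b_i\) into the part of the network outside \(\alpha_i\). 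I obtain \(f_{a_i}, f_{b_i}\) and their witnessing arcs by evaluating the cycle-query structure at the cycle vertices \(a_i\) and \(b_i\) while suppressing the contribution of \(\hat d_i\) itself, and then report the corresponding points through the chosen terminal. Finally I would collect the bounds: the envelope has \(O(\s)\) breakpoints and each arc contributes a constant-size local parallel-path structure, so the total size is \(O(n)\); given the \(O(\s)\)-time envelope of Lemma \ref{thm::envelope}, the point-location structure, the circular list of relevant arcs, and the fractional cascading all build in \(O(n)\) time, and every query performs a constant number of \(O(\log n)\) point locations followed by an \(O(k)\) sweep.

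The main obstacle I anticipate is the outside-the-bead part of arc queries: I must retrieve \(f_{a_i}, f_{b_i}\) and report their witnesses without rebuilding a separate structure for each \(\alpha_i\). The delicate points are suppressing exactly the function \(\hat d_i\) when evaluating the envelope at \(a_i\) and \(b_i\)---using the no-gap structure of Corollary \ref{thm::nogaps} to fall back to a neighbouring relevant arc when \(\alpha_i\) itself defines the envelope there---correctly combining the two terminal routes so that a point reached more cheaply through the opposite terminal is not over-counted, and avoiding double-reporting of points that are simultaneously farthest within the bead and across a terminal. Accounting for a possible overlong arc, whose cheapest connection between \(a_i\) and \(b_i\) runs along \(\gamma_i\) rather than \(\beta_i\) and whose high plateau may tie with another, requires separate bookkeeping throughout.
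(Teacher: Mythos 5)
Your cycle-query procedure and your resource accounting are essentially the paper's, but there is a genuine gap in the arc queries, exactly at the point you flag as delicate and then leave unresolved: combining the two terminal routes. For a query point \(q\) on the sub-path of \(\alpha_i\) between \(\hat b_i\) and \(\hat a_i\) (the middle case), a point \(x\) outside the bead is reached at distance \(\min\bigl(w_{qa_i} + d(a_i,x),\, w_{qb_i} + d(b_i,x)\bigr)\), and the outward farthest distance is the maximum of this \emph{minimum} over all \(x\). That quantity is not determined by \(d(q,a_i)\), \(d(q,b_i)\) and the farthest distances \(f_{a_i}, f_{b_i}\) from the terminals: the witness of \(f_{a_i}\) may be much closer to \(q\) via \(b_i\), and the true outward farthest point on a far arc \(\alpha_j\) is the point \(\hat q_j\), which slides strictly between \(\hat a_j\) and \(\hat b_j\) as \(q\) moves and in general coincides with neither. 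So evaluating the cycle structure at \(a_i\) and \(b_i\) and ``reporting through the chosen terminal'' returns the wrong points and overestimates the distance. A single terminal query is valid only in the left case (\(q\) between \(a_i\) and \(\hat b_i\), query from \(a_i\)) and the right case (\(q\) between \(\hat a_i\) and \(b_i\), query from \(b_i\)), where every outward shortest path provably leaves through one fixed terminal.

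The missing idea is the paper's translation of a middle-case arc query to the unique point \(q'\) on \(\beta_i \subseteq C\) at the same relative position as \(q\); the cycle structure, whose envelope \(\hat D\) of the \(\hat d_j\) was built precisely for this max-of-min geometry, then answers the outward query directly. One residual subtlety remains even after the translation: when \(\hat D\) at \(q'\) is defined by the low plateau of \(\hat d_i\) itself, the correct answer is recovered from the partial envelope \(\hat D'\) of Lemma~\ref{thm::envelope} (low plateaus suppressed), which at \(q'\) coincides with the upper envelope of the remaining arc functions; your proposal to ``fall back to a neighbouring relevant arc'' in the circular list is not guaranteed to land on the arc that is actually second-highest at \(q'\). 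You would need to add both of these mechanisms to make the arc queries correct.
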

\begin{proof}
Let \(N\) be a bead-chain network with main cycle \(C\) and arcs \(\alpha, \alpha_1,\dots, \alpha_\s\) where only \(\alpha\) may be overlong. Let \(N \setminus \alpha\) be the network obtained by the removing the potentially overlong arc from \(N\), i.e., \(N \setminus \alpha := C \cup \alpha_1 \cup \dots \cup \alpha_s\). %Our data structure for farthest-point queries in~\(N\) consists of the following smaller data structures. 

We support queries from the main cycle \(C\) by constructing (i) a data structure for queries in the parallel-path network \(\alpha \cup C\) and (ii) a data structure for queries in the bead-chain network \(N \setminus \alpha\) that has no overlong arc. This construction takes 
linear time, due to Theorem\nobreakspace \ref {thm::parallelfirst::twoterminal} and\nobreakspace Lemma\nobreakspace \ref {thm::envelope}. For~a~farthest-point query from \(q \in C\), we perform a farthest-distance query from \(q\) in  \(\alpha \cup C\) and in~\(N \setminus \alpha\). Depending on which of these two queries reported the largest farthest distance, we conduct the appropriate farthest-point queries in \(\alpha \cup C\) or in \(N \setminus \alpha\).

We support queries from arc \(\alpha_i\) (from \(\alpha\)) by combining (i) a data structure for farthest-point queries in the cycle \(\alpha_i \cup \beta_i\) (in \(\alpha \cup C\)), with (ii) the data structure supporting farthest-point queries in \(N\) from the cycle \(C\) described above, and (iii) the partial upper envelope \(\hat D'\) from the proof of Lemma\nobreakspace \ref {thm::envelope}.\footnote{Recall that \(\hat D'\) was the upper envelope of the functions that result from replacing the low plateaus of \(\hat d_1,\dots,\hat d_\s\) by extending their increasing/decreasing segments.}

We begin a query from \(q \in \alpha_i\) with the corresponding query from \(q\) in \(\alpha_i \cup \beta_i\). After that, we report any farthest points from \(q\) in \(N\) outside of \(\alpha_i\) based on the position of \(q\) along  \(\alpha_i\): When \(q\) lies between \(a_i\) and \(\hat b_i\) (left case), we report the farthest points from \(q\) in \(N \setminus \alpha_i\) with a query from \(a_i\) in \(N\). When \(q\) lies between \(\hat a_i\) and \(b_i\) (right case), we report the farthest points from \(q\) in \(N \setminus \alpha_i\) with a query from \(b_i\) in \(N\). When \(q\) lies between \(\hat b_i\) and \(\hat a_i\) (middle case), we attempt a query from the unique point \(q'\) on \(\beta_i\) with \(d(\hat b_i,q) = d(a_i,q)\).  Since \(q'\) preserves the relative position from \(q\) on \(\alpha_i\), the query from \(q'\) yields the farthest points from \(q\) in \(N\) outside of \(\alpha_i\)---except when \(q'\) has only a single farthest point on \(\alpha_i\). This occurs when \(\hat D\) has the low plateau of \(\hat d_i\) at \(q'\). Fortunately, we are able to recover the correct answer, since the upper envelope of \(\hat d_1,\dots, \hat d_{i-1},\hat d_{i+1},\dots, \hat d_\s\)  coincides at \(q'\) with the partial upper envelope \(\hat D'\) from the proof of Lemma\nobreakspace \ref {thm::envelope}. 

Our data structure has \(O(n)\) size and construction time, since each vertex appears only in a constant number of sub-structures each of which have linear construction time. Every farthest-point query takes \(O(k + \log n)\) time, because each query consists of a constant number of \(O(\log n)\)-time farthest-distance queries followed by a constant number of farthest point queries only in those sub-structures that actually contain farthest points from the original query. \qed
\end{proof}

%\begin{remark} The number of farthest points in a bead-chain network with \(\s\) arcs ranges from \(1\) to \(\s+1\), i.e., one on every arc and one on the main cycle. \end{remark}

\section{Abacus Networks} \label{sec::abacus}

An \emph{abacus} is a network \(A\) consisting of a parallel-path network \(N\) with arcs attached to its parallel paths, as illustrated in Fig.\nobreakspace \ref {fig::abacus}. Let \(P_1,\dots, P_\p\) be the parallel paths of \(N\) and let \(B_i\) be the \(i\)-th parallel path with attached arcs.  

\begin{figure}[htb]
	\centering %\vspace{-1\baselineskip}
	\includegraphics[]{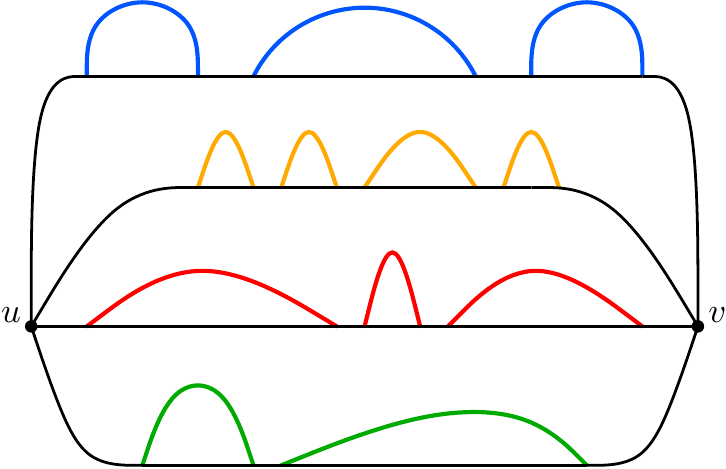}
	\caption{An abacus with the arcs (colored) attached to its parallel-path network (black).\label{fig::abacus}}
	%\vspace{-1\baselineskip}
\end{figure}

We split farthest-point queries in an abacus into an \emph{inward query} and an \emph{outward query}: an inward query considers farthest points on the bead-chain containing the query point; an outward query considers farthest points on the remaining bead-chains. We first perform the farthest distance version of inward and outward queries before reporting farthest points where appropriate. Figure\nobreakspace \ref {fig::abacus::inwardoutward} illustrates how we treat inward and outward queries in the following. 

For an inward query from \(q\) on \(B_i\), we construct the bead-chain network~\(B_i'\) consisting of \(B_i\) with an additional edge from \(u\) to \(v\) of weight \(d(u,v)\), as illustrated in Fig.\nobreakspace \ref {fig::abacus::inwardoutward::inward}. Since \(B_i'\) preserves distances from \(A\), the farthest points from \(q\in B_i'\) are the farthest points from \(q\) among the points on \(B_i\) in \(A\). 

\begin{figure}
	\subfloat[][An inward query.]{\includegraphics[page=2,scale=0.8]{abacus-crop}\label{fig::abacus::inwardoutward::inward}} \quad
	\subfloat[][The left case of an outward query.]{\includegraphics[page=3,scale=0.8]{abacus-crop}\label{fig::abacus::inwardoutward::left}} 
	
	\subfloat[][The right case of an outward query.]{\includegraphics[page=4,scale=0.8]{abacus-crop}\label{fig::abacus::inwardoutward::right}} \quad
	\subfloat[][An outward query from an arc.]{\includegraphics[page=5,scale=0.8]{abacus-crop}\label{fig::abacus::inwardoutward::arc}} 
	
	\subfloat[][Translating a query to the  virtual edge.]{\includegraphics[page=6,scale=0.8]{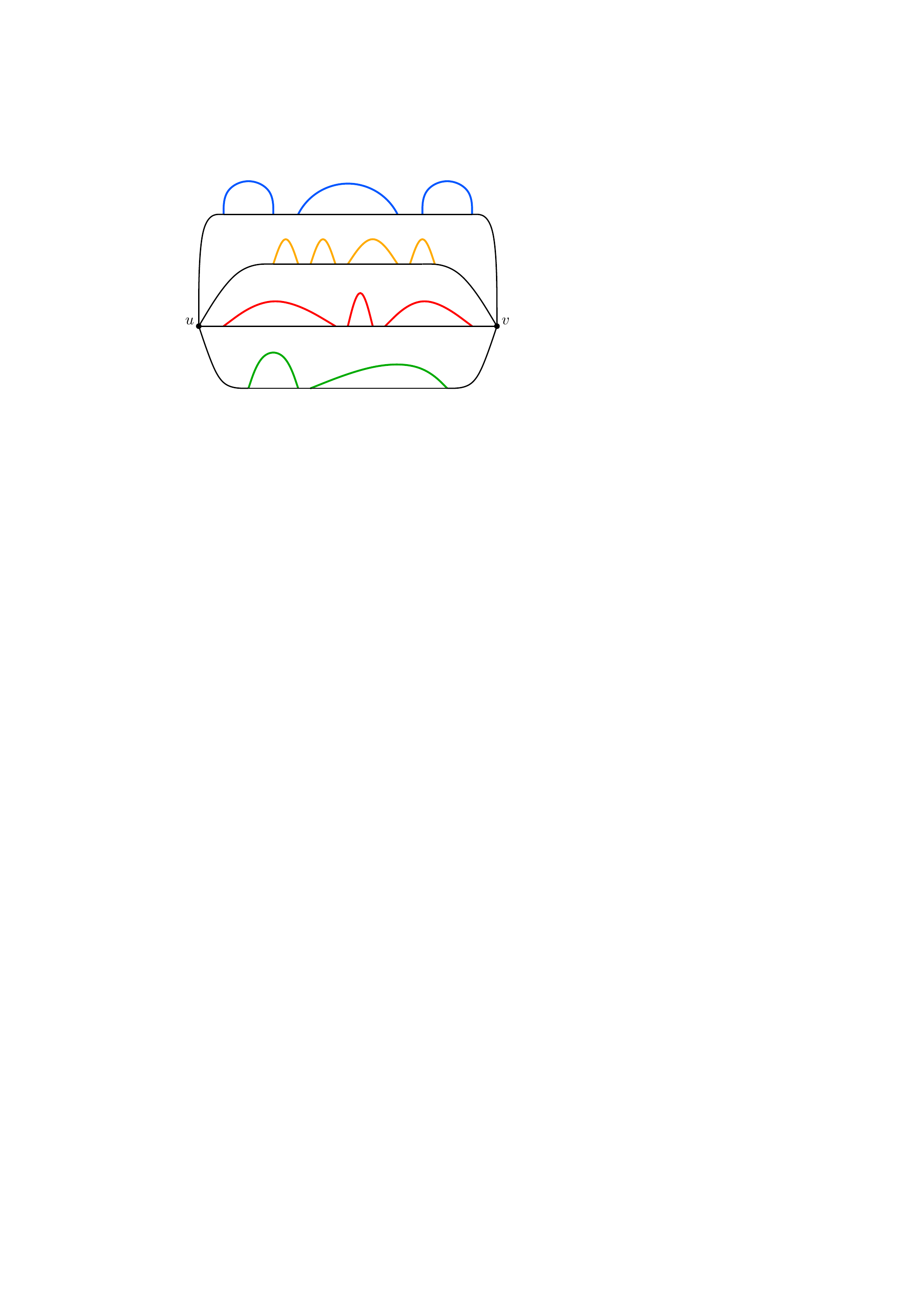}\label{fig::abacus::inwardoutward::virtual}} \quad
	\subfloat[][Collapsing the chains. ]{\includegraphics[page=7,scale=0.8]{abacus}\label{fig::abacus::inwardoutward::collapse}} 
	
	\caption{Inward (a) and outward (b--f) queries for the abacus network from~Fig.\nobreakspace \ref {fig::abacus}. Inward queries are answered in the bead-chain containing the query (a). Outward queries in the side case are answered with queries form the terminals (b,c). Outward queries in the middle case from arcs are translated to queries from the path (d) and then to queries from a virtual edge (e). From the perspective of the virtual edge, we conceptually collapse all bead-chains of the abacus to support virtual queries (f).    \label{fig::abacus::inwardoutward}}
\end{figure}

For outward queries in an abacus, we distinguish the same three cases as for parallel-path networks: we are in the left case when every shortest path tree reaches \(u\) before \(v\), we are in the right case when every shortest path tree reaches \(v\) before \(u\), and we are in the middle case otherwise. Analogously to Lemma\nobreakspace \ref {thm::parallelfirst::threecases}, the left case applies when we are within distance \(d(u, \bar v_i)\) from \(u\) and the right case applies when we are within distance \(d(v,\bar u_i)\) from \(v\). 

For an outward query from \(q \in B_i\) in the left case, \(q\) has the same farthest points as \(u\) outside of \(B_i\). During the construction of the networks \(B_1', \dots, B_\p'\) for inward queries, we determine a list \(L_j\) of the farthest points from \(u\) in \(B_j'\). Similarly to our treatment of the left case for parallel-path networks, we only keep the list achieving the highest farthest distance and the lists achieving the second highest farthest distance. With this preparation, answering the query for~\(q\) amounts to reporting the entries of the appropriate lists \(L_j\) with \(j \ne i\). 

For middle case outward queries, we proceed along the following four steps: First, we translate every outward query from an arc of \(B_i\) to an outward query from the path \(P_i\), i.e., we argue that it suffices to consider outward queries from the parallel paths (Fig.\nobreakspace \ref {fig::abacus::inwardoutward::arc}). Second, we translate outward queries from \(P_i\) to outward queries from a virtual edge \(\tilde e\) connecting the terminals (Fig.\nobreakspace \ref {fig::abacus::inwardoutward::virtual}). Third, we speed up queries from the virtual edge by superimposing the data structures for the bead-chains \(B_1 \cup \tilde e\), \dots, \(B_\p \cup \tilde e\), i.e., by conceptually collapsing the parallel chains (Fig.\nobreakspace \ref {fig::abacus::inwardoutward::collapse}). Finally, we recover the correct answer to the original outward query from the answer obtained with an outward query from the virtual edge. 

\begin{lemma}
Let \(\alpha\) be an arc in an abacus and let \(\beta\) be the other path connecting the endpoints of \(\alpha\). For every point \(q \in \alpha\) in the middle case, there is a point \(q' \in \beta\) such that \(q'\) has the same outward farthest points as \(q\).  
\end{lemma}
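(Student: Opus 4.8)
The plan is to show that the middle-case outward query from a point $q$ on an arc $\alpha$ depends only on the distances $d(q,a)$ and $d(q,b)$ to the endpoints $a,b$ of $\alpha$, together with the fact that shortest paths leaving $\alpha$ exit through both $a$ and $b$. Concretely, I would first observe that in the middle case every shortest path from $q$ to a point outside $B_i$ must leave the bead-chain $B_i$ through one of the terminals $u$ or $v$, and that the shortest path from $q$ to such a terminal is realized by first reaching $a$ or $b$ and then travelling along the path $P_i$ (equivalently along $\beta$, whichever is shorter). The key point is that, since $\alpha$ and its companion path $\beta$ together form a cycle that is itself a parallel-path (indeed bead-chain) structure, the distances from $q$ to $u$ and to $v$ are determined entirely by $\min(d(q,a),\, w_\alpha - d(q,a))$ plus fixed offsets depending on the positions of $a$ and $b$ along $P_i$.

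Next I would construct $q'$ explicitly. Let $a,b$ be the endpoints of $\alpha$ with $a$ nearer to $u$ along $P_i$. I would choose $q' \in \beta$ to be the point satisfying $d(q',a) = d(q,a)$ when $d(q,a) \le w_\beta$, and more carefully, the point that preserves the \emph{split} of $q$ with respect to the two exits: that is, $q'$ should reproduce the same pair of exit distances $\bigl(d(q,a),\, d(q,b)\bigr)$ as seen from the cycle $\alpha \cup \beta$. This mirrors the construction already used in the proof of Theorem~\ref{thm::beadchains}, where a middle-case query from an arc was translated to a query from the unique point $q'$ on $\beta$ with $d(\hat b_i, q) = d(a_i, q)$; I would invoke exactly that correspondence here. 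The crucial verification is that $q'$ lies in the middle case as well, so that its shortest paths to the outside also exit through both terminals, guaranteeing that the outward behavior matches.

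With $q'$ fixed, the heart of the argument is to check that $d(q, z) = d(q', z)$ for every point $z$ outside $B_i$, which reduces to checking $d(q,u) = d(q',u)$ and $d(q,v) = d(q',v)$, since every outward shortest path factors through a terminal. This in turn reduces to the identity $d(q,a) + d(a,u) = d(q',a) + d(a,u)$ along the $u$-side and the symmetric identity on the $v$-side, together with the guarantee that the ``other'' exit does not give a shorter route — which is precisely what the middle-case assumption supplies. Because the arc $\alpha$ and the path $\beta$ have the same endpoints and the middle case forces $q$ to be positioned strictly between the two farthest points $\hat b$ and $\hat a$, the construction of $q'$ can always be carried out, and the two exit distances are matched simultaneously.

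The main obstacle, I expect, is handling the boundary between the middle case and the side cases cleanly: one must verify that the map $q \mapsto q'$ is well-defined on the whole middle-case sub-arc and that, at the point where $q'$ would acquire only a single farthest point on $\alpha$ (corresponding to $\hat D$ exhibiting the low plateau of $\hat d_i$ at $q'$), the outward answer is still recoverable. This is the same degenerate situation flagged in the proof of Theorem~\ref{thm::beadchains}, and I would resolve it identically, by appealing to the partial upper envelope $\hat D'$ which coincides at $q'$ with the upper envelope of the remaining arc-distance functions. Verifying that the two defining distance equations can be satisfied simultaneously — rather than trading one exit off against the other — is the step that requires the symmetry of the cycle $\alpha \cup \beta$, and it is where the bulk of the (routine but essential) algebra lives.
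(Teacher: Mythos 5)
Your overall strategy---explicitly constructing $q'\in\beta$ and comparing outward distances through the two exits $a$ and $b$---is workable, but the invariant you set out to verify is the wrong one, and the argument would fail at exactly the step you flag as the crux. You aim to show $d(q,z)=d(q',z)$ for every outward $z$, reduced to the two equalities $d(q',a)=d(q,a)$ and $d(q',b)=d(q,b)$. These cannot hold simultaneously: for $q$ in the middle case both exit distances are realized along $\alpha$ itself, so $d(q,a)+d(q,b)=w_\alpha$, whereas every $q'\in\beta$ satisfies $d(q',a)+d(q',b)=w_\beta\le w_\alpha$. Hence the ``two defining distance equations'' you hope to satisfy simultaneously are inconsistent whenever $w_\alpha>w_\beta$, and no appeal to the symmetry of the cycle $\alpha\cup\beta$ rescues them. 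Matching only one exit (say $d(q',a)=d(q,a)$) shortens the other exit by $w_\alpha-w_\beta$ and genuinely changes which outward points are farthest, besides being infeasible when $d(q,a)>w_\beta$.

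The fix is to relax equality to a common additive offset, which is all the lemma needs: take $q'$ with $d(q',a)=d(q,a)-\tfrac{w_\alpha-w_\beta}{2}$ and hence $d(q',b)=d(q,b)-\tfrac{w_\alpha-w_\beta}{2}$. Both right-hand sides are nonnegative precisely because the middle case confines $q$ to the sub-arc between $\hat b$ and $\hat a$, i.e., $d(q,a)\ge\tfrac{w_\alpha-w_\beta}{2}$ and $d(q,b)\ge\tfrac{w_\alpha-w_\beta}{2}$, so $q'$ exists on $\beta$; and since every outward shortest path factors through $a$ or $b$ (and then through $u$ or $v$), every outward distance drops by the same constant $\tfrac{w_\alpha-w_\beta}{2}$, which preserves the set of maximizers. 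This is exactly what the paper's proof expresses as a continuous deformation of $\alpha$ onto $\beta$ under which the distance from $q$ to all outside points decreases at the same rate. Your concluding concern about the low plateau of $\hat d_i$ is a separate degeneracy belonging to the query procedure in Theorem~\ref{thm::beadchains}, not to this lemma.
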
%
\begin{proof}
We continuously deform \(\alpha\) to \(\beta\) maintaining the relative position of \(q\) to the endpoints of \(\alpha\). The distance from \(q\) to all points in the network decreases at the same rate, hence, the outward farthest points remain the same. \qed
\end{proof}

We introduce a virtual edge \(\tilde{e}\) from \(u\) to \(v\) of length \(w_\p\), i.e., the length of the longest \(u\)-\(v\)-path \(P_\p\) in the underlying parallel-path network, as illustrated in Fig.\nobreakspace \ref {fig::abacus::inwardoutward::virtual}. Let \(\bar u\) be the farthest point from \(u\) on \(\tilde{e}\) and let \(\bar v\) be the farthest point from \(v\) on \(\tilde{e}\). From Lemma\nobreakspace \ref {thm::parallelwalk}, we know that the sub-edge \(\bar u\bar v\) of \(\tilde e\) has length \(d(u,v)\) and, thus, the same length as the sub-path from \(\bar u_i\) to \(\bar v_i\) on each parallel path \(P_i\). We translate an outward query from \(q \in P_i\) to a query from the unique point \(\tilde q\) on \(\tilde{e}\) such that \(\tilde q\) has the same distance to \(\bar u\) and to \(\bar v\) as \(q\) to \(\bar u_i\) and to \(\bar v_i\). 

\begin{lemma}
For \(q \in P_i\) in the middle case, the farthest points from \(q\) in \(P_i \cup B_j\) are the farthest points from \(\tilde q\) in \(\tilde{e} \cup B_j\) for every \(j \ne i\). 
\end{lemma}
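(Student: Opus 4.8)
The plan is to reduce the statement to a single clean fact: moving the query from $q\in P_i$ to $\tilde q\in\tilde e$ increases the distance to \emph{every} point of $B_j$ by one and the same additive constant $\delta$. Once this is established, the farthest points on $B_j$ are preserved automatically, since shifting a function by a constant does not move the location of its maximum. Concretely, $P_i$ and $B_j$ share only the terminals $u$ and $v$, so every shortest path from a point of $P_i$ to a point $y\in B_j$ must enter $B_j$ through $u$ or through $v$. Writing $d_{B_j}$ for distances inside $B_j$, this gives the through-terminal decompositions $d(q,y)=\min\bigl(d(q,u)+d_{B_j}(u,y),\,d(q,v)+d_{B_j}(v,y)\bigr)$ and, identically, $d(\tilde q,y)=\min\bigl(d(\tilde q,u)+d_{B_j}(u,y),\,d(\tilde q,v)+d_{B_j}(v,y)\bigr)$. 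Hence it suffices to prove $d(\tilde q,u)=d(q,u)+\delta$ and $d(\tilde q,v)=d(q,v)+\delta$ for a \emph{common} $\delta$, because then $\delta$ factors out of both minima at once and $d(\tilde q,y)=d(q,y)+\delta$ for all $y\in B_j$.

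To pin down $\delta$, I would first invoke the middle-case hypothesis, which by the analogue of Lemma~\ref{thm::parallelfirst::threecases} places $q$ on the sub-path from $\bar v_i$ to $\bar u_i$. By Lemma~\ref{thm::parallelwalk}(i) this sub-path has length $d(u,v)$, so the antipodes $\bar v_i$ and $\bar u_i$ sit at distance $\tfrac{w_i-d(u,v)}{2}$ from $u$ and from $v$, respectively. The same computation on the virtual edge $\tilde e$ of length $w_\p$ places $\bar v$ and $\bar u$ at distance $\tfrac{w_\p-d(u,v)}{2}$ from $u$ and $v$, with the middle segment $\bar u\bar v$ again of length $d(u,v)$. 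Since $\tilde q$ is defined to have the same distances to $\bar u$ and $\bar v$ as $q$ has to $\bar u_i$ and $\bar v_i$, walking from either terminal to $\tilde q$ covers exactly $\tfrac{w_\p-w_i}{2}$ more than walking from that terminal to $q$. This yields $w_{\tilde q u}=w_{qu}+\delta$ and $w_{\tilde q v}=w_{qv}+\delta$ along the respective $u$-$v$ paths, with the common value $\delta=\tfrac{w_\p-w_i}{2}\ge 0$; it also guarantees $\tilde q$ lands inside the middle segment $\bar v\bar u$ of $\tilde e$, so $\tilde q$ is well defined.

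It then remains to transport this common offset from the path lengths $w_{qu},w_{qv}$ to the genuine network distances. Because the only $u$-$v$ routes in $P_i\cup B_j$ are $P_i$ and the $u$-$v$ path of $B_j$ (of length $w_j$, as arcs are never shortcuts), we have $d(q,u)=\min(w_{qu},w_{qv}+w_j)$ and $d(\tilde q,u)=\min(w_{\tilde q u},w_{\tilde q v}+w_j)$, and symmetrically for $v$. Substituting $w_{\tilde q u}=w_{qu}+\delta$ and $w_{\tilde q v}=w_{qv}+\delta$ lets $\delta$ factor out of each minimum, giving $d(\tilde q,u)=d(q,u)+\delta$ and $d(\tilde q,v)=d(q,v)+\delta$. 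Feeding these back into the through-terminal decompositions produces $d(\tilde q,y)=d(q,y)+\delta$ for every $y\in B_j$, and the claim follows.

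I expect the main obstacle to be bookkeeping rather than any conceptual difficulty: the crux is verifying that the extra length $w_\p-w_i$ splits \emph{evenly} on the two sides of $q$, so that the shift toward $u$ and the shift toward $v$ are the \emph{same} $\delta$ and not two unrelated offsets. This even split is exactly what the equal middle-segment length $d(u,v)$ from Lemma~\ref{thm::parallelwalk} delivers. As a sanity check it matches the continuous-deformation picture used in the preceding lemma: lengthening $P_i$ into $\tilde e$ while keeping $q$ centred between its two antipodes recedes $q$ from both terminals at equal rates, so all outward distances to $B_j$ grow in lockstep and the farthest points never change.
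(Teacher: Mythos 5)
Your proof is correct, and it establishes exactly the fact the paper's argument rests on, but by a different route. The paper proves the lemma with a continuous deformation: it elongates \(P_i\) into \(\tilde e\) while keeping \(q\) at the same relative position between \(u\) and \(v\) (hence between \(\bar v_i\) and \(\bar u_i\)), observes that the distance from \(q\) to every point outside \(B_i\) increases uniformly during this process, and concludes that the outward farthest points never change. You instead give a static, purely algebraic verification of the same invariant: using the through-terminal decomposition \(d(q,y)=\min\bigl(d(q,u)+d_{B_j}(u,y),\,d(q,v)+d_{B_j}(v,y)\bigr)\) together with the antipode positions from Lemma\nobreakspace \ref {thm::parallelwalk}, you compute the explicit common offset \(\delta=\frac{w_\p-w_i}{2}\) by which both \(d(\cdot,u)\) and \(d(\cdot,v)\) grow, and let it factor out of the minima. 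The trade-off is clear: the paper's deformation argument is shorter and parallels the companion lemma for arcs, but it asserts the uniform increase without justification; your version actually proves the one nontrivial point --- that the extra length \(w_\p-w_i\) splits evenly toward \(u\) and toward \(v\), so the two offsets coincide --- which is precisely where Lemma\nobreakspace \ref {thm::parallelwalk}(i) is needed and which a reader might otherwise have to reconstruct. Your closing remark correctly identifies that the two arguments are two presentations of the same uniform-shift phenomenon.
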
%
\begin{proof} We continuously elongate \(P_i\) to \(\tilde{e}\) maintaining the relative position of \(q\) to \(u\) and \(v\) and, thus, to \(\bar v_i\) and \(\bar u_i\). At the end of this process \(q\) coincides with \(\tilde q\). The distance from \(q\) to all points outside of \(B_i\) increases uniformly. Hence, the outward farthest points remain the same throughout the deformation. \qed
\end{proof}

It would be too inefficient to inspect each bead-chain network \(B_j \cup \tilde{e}\) with \(j\ne i\) to answer an outward query from \(q \in P_i\). Instead, we first determine the upper envelopes of the farthest-arc distances \(\hat D_1,\dots, \hat D_\p\) along \(\tilde{e}\) in each \(B_1 \cup \tilde{e},\dots,B_\p \cup \tilde{e}\) and then compute their upper envelope \(U_1\) as well as their second level \(U_2\), i.e., the upper envelope of what remains when we remove the segments of the upper envelope. Computing the upper envelope and the second level takes \(O(n \log \p)\) time, e.g., using plane sweep. Using fractional cascading, we support constant time jumps between corresponding segments of \(U_1\) and \(U_2\). The resulting structure occupies \(O(n)\) space, since each of the \(O(n)\) arcs along any bead-chain contributes at most four bending points to \(U_1\) and \(U_2\).

We answer an outward query from \(q \in P_i\) in the middle case by translating \(q\) to \(\tilde q\). When the segment defining \(U_1\) at \(\tilde{q}\) is from some arc \(\alpha\) of \(B_j\) with \(j\ne i\), then \(\alpha\) contains an outward farthest point from \(q\). When the segment defining \(U_1\) at \(\tilde{q}\) corresponds to an arc of \(B_i\), then we jump down to \(U_2\), which leads us to an arc containing an outward farthest point from \(q\). We report the remaining arcs with outward farthest points by walking \(\tilde{q}\) along \(U_1\) and \(U_2\). In order to skip long sequences of segments from \(B_i\), we introduce pointers along \(U_1\) to the next segment from another chain in either direction. Answering outward queries in the middle case takes \(O(k  + \log n)\) time  after \(O(n \log p)\) construction time.% consuming \(O(n)\) space. 

\begin{theorem}
Let \(N\) be an abacus with \(n\) vertices and \(\p\) chains. There is a data structure of size \(O(n)\) with \(O(n \log \p)\) construction time  supporting  farthest-point queries on \(N\) in \(O(k + \log n)\) time, where \(k\) is the number of farthest points. 
\end{theorem}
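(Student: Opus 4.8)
The plan is to assemble the abacus data structure from the sub-structures developed above, splitting every farthest-point query into an \emph{inward} part (farthest points within the bead-chain \(B_i\) containing the query \(q\)) and an \emph{outward} part (farthest points among the remaining bead-chains). For each part I would first run the farthest-\emph{distance} version, compare the two resulting distances, and only then report the actual farthest points from whichever part—or both parts—attains \(\ecc(q)\). Since reporting happens only where farthest points genuinely occur, a query costs \(O(k)\) reporting time plus a constant number of \(O(\log n)\)-time locate/distance steps, giving the claimed \(O(k + \log n)\) bound.

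For the inward part I would use the bead-chain \(B_i'\) obtained by adding an edge of weight \(d(u,v)\) between the terminals; as already observed this preserves the relevant distances, so a query in \(B_i'\) via Theorem~\ref{thm::beadchains} answers the inward query in \(O(k + \log n)\) time after linear construction per chain. For outward queries I would first locate which of the left, middle, or right case applies, using the thresholds \(d(u,\bar v_i)\) and \(d(v,\bar u_i)\) analogous to Lemma~\ref{thm::parallelfirst::threecases}. In the left (symmetrically right) case the outward farthest points of \(q\) coincide with those of \(u\) (of \(v\)), so I would precompute, during the construction of the \(B_j'\), the lists \(L_j\) of farthest points from \(u\) in each \(B_j'\), keep only the lists attaining the largest and second-largest farthest distance, and report the appropriate \(L_j\) with \(j \ne i\); this mirrors the left-case analysis for parallel paths and runs in \(O(k + \log n)\).

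The heart of the argument is the middle case. Here I would chain the two reduction lemmas above: first deform any arc query to a query from the path \(P_i\), then translate the query from \(q \in P_i\) to the point \(\tilde q\) on the virtual edge \(\tilde e\) of length \(w_\p\), so that the outward farthest points in each \(P_i \cup B_j\) agree with those in \(\tilde e \cup B_j\). To avoid inspecting all \(\p\) chains, I would precompute the farthest-arc upper envelopes \(\hat D_1,\dots,\hat D_\p\) along \(\tilde e\), then their upper envelope \(U_1\) and second level \(U_2\), linking corresponding segments by fractional cascading and installing pointers along \(U_1\) to the next segment belonging to a different chain. A middle-case outward query then amounts to locating \(\tilde q\) on \(U_1\): if its defining segment comes from some \(B_j\) with \(j \ne i\) then that arc holds an outward farthest point, whereas if it comes from \(B_i\) itself I would drop to \(U_2\); walking along \(U_1\) and \(U_2\) with the chain-skipping pointers reports the remaining arcs in \(O(k + \log n)\).

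The main obstacle—and the source of the \(O(n \log \p)\) construction time—is the superimposition step: building \(U_1\) and \(U_2\) from the \(\p\) arc-distance envelopes requires an upper-envelope computation across chains (e.g.\ by plane sweep) rather than the purely incremental linear-time merge available within a single bead-chain in Lemma~\ref{thm::envelope}. I would need to verify that each of the \(O(n)\) arcs contributes only \(O(1)\) breakpoints to \(U_1\) and \(U_2\), keeping the size at \(O(n)\), and that dropping to \(U_2\) and skipping runs of \(B_i\)-segments never discards a genuine outward farthest point nor reports a spurious one—so that the translation back from \(\tilde q\) to \(q\) is faithful. Granting these, the stated size, construction time, and query bounds follow by summing the constant number of sub-structures touched per query.
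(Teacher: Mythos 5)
Your proposal follows the paper's own argument essentially step for step: the inward/outward split with distance queries preceding point reporting, the bead-chain \(B_i'\) with the added \(d(u,v)\)-edge for inward queries, the precomputed lists \(L_j\) for the left/right cases, and the translation of middle-case queries to the virtual edge \(\tilde e\) with the superimposed envelopes \(U_1\), \(U_2\), fractional cascading, and chain-skipping pointers. The points you flag as needing verification (the \(O(1)\) breakpoints per arc and the faithfulness of the \(\tilde q\)-translation) are exactly the claims the paper itself asserts at the same level of detail, so this is the same proof.
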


\section{Two-Terminal Series-Parallel Networks} \label{sec::twoterminal}

 Consider a two-terminal series-parallel network \(N\). By undoing all possible series operations and all possible parallel operations in alternating rounds, we reduce~\(N\) to an edge connecting its terminals and decompose \(N\) into paths that reflect its creation history. The colors in Fig.\nobreakspace \ref {fig::seriesparallel::creation} illustrate this decomposition.

\begin{figure}
	\centering \vspace{-1\baselineskip}
	\includegraphics[scale=0.9, page=2]{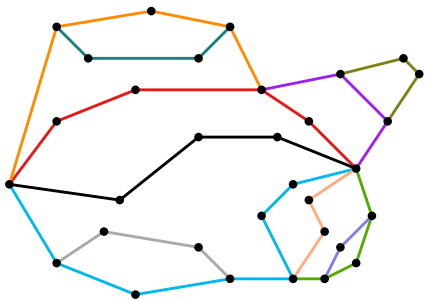}
	\caption{A two-terminal series-parallel network with colors indicating the parallel operations in a potential creation history: starting with a single red edge, we create a parallel yellow and a parallel blue edge. Then we subdivide the blue edge using series operations until we create a parallel purple edge for one of the blue edges and so forth.\label{fig::seriesparallel::creation}}  \vspace{-1\baselineskip}
\end{figure} 

\begin{lemma} \label{thm::reconstruction}
Let \(N\) be a series-parallel network with parallelism \(\p\) and serialism~\(\s\). Identifying the terminals of \(N\) and reconstructing its creation takes \(O(\s + \p)\) time.  
\end{lemma}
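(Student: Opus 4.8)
The plan is to reconstruct a creation history by \emph{undoing} series and parallel operations and to read off the terminals as the two endpoints of the single edge that survives at the end. First I would pin down the number of reductions. Each series operation adds one vertex and one edge, and each parallel operation adds one edge, so $N$ has $n = \s + 2$ vertices and $m = \s + \p + 1$ edges; hence $\s + \p = m - 1$, and since $N$ is biconnected we have $m \ge n$, so $n + m = O(\s + \p)$. The target bound is therefore linear in the size of $N$, and it suffices to show that the reduction performs exactly $\s + \p$ steps (exactly $\s$ vertex suppressions and $\p$ edge merges) and that each step costs amortized $O(1)$ time.

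For correctness I would invoke Duffin's characterization \cite{duffin1965topology}: a biconnected graph is two-terminal series-parallel precisely when it can be reduced to a single edge by repeatedly (P) merging a pair of parallel edges and (S) suppressing a degree-two vertex. Because each such reduction is literally the inverse of a parallel or a series operation, reversing the recorded sequence of reductions yields a valid creation history, and the two endpoints of the final edge are valid terminals of $N$ by this very reversal. In particular I never need the terminals in advance: they are discovered as the last surviving pair, and whichever pair survives is legitimate even though a network may admit several terminal choices. The one thing to verify is that the process halts exactly at a single edge; once a lone (multigraph) edge remains, both endpoints have degree one and neither rule applies, while biconnectedness together with Duffin's theorem guarantees we reach this state rather than getting stuck earlier.

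Algorithmically I would store the current multigraph in doubly-linked incidence lists, so that deleting an edge and splicing two edges into one each cost $O(1)$, and keep a degree counter at every vertex together with a worklist of currently \emph{reducible} vertices, namely those of degree two. Popping a degree-two vertex $w$ with incident edges $wa$ and $wb$, I perform a series reduction when $a \neq b$ (replacing $wa, wb$ by a fresh edge $ab$ and recording a series operation) and a parallel reduction when $a = b$ (deleting one copy and recording a parallel operation); in either case I update the degrees of the affected vertices and push any newly reducible vertex onto the worklist. Each reduction removes one vertex or one edge and touches only its $O(1)$ neighbours, so the worklist processes $O(\s + \p)$ events, each in $O(1)$ time.

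The step I expect to be the main obstacle is handling parallel edges created between two vertices whose degree exceeds two, which the degree-two worklist does not detect on its own: a series reduction that produces an edge $ab$ may duplicate an edge $ab$ whose endpoints are not (yet) reducible. Performing this adjacency test in $O(1)$ amortized time upon creating $ab$ is the crux of the linear bound, and it is exactly the difficulty resolved by the linear-time series-parallel parsing of Valdes, Tarjan, and Lawler; I would adopt their technique of maintaining, per vertex, enough incidence information to test adjacency of $a$ and $b$ when $ab$ is created and to enqueue the resulting parallel edge, so that every created edge is inspected a constant number of times. With this in place, the alternating character of the reduction emerges automatically from the worklist order, and the whole reconstruction, including identification of the terminals, runs in $O(\s + \p)$ time.
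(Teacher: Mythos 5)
Your overall plan coincides with the paper's: reduce \(N\) to a single edge by undoing series and parallel operations, read the terminals off the surviving edge, and recover the creation history by reversing the recorded reductions; the operation count \(\s+\p\) and the appeal to Duffin's confluence result are the same. Where you genuinely diverge is in the one step you yourself flag as the crux: detecting a newly created pair of parallel edges between two vertices of degree greater than two. The paper sidesteps this entirely by exploiting planarity. It maintains the planar dual \(N^*\) alongside \(N\), together with degree-indexed bucket arrays for both the vertices of \(N\) and the faces of \(N\) (vertices of \(N^*\)). A pair of parallel edges is precisely a face of degree two, so a parallel reduction in \(N\) is a series reduction in \(N^*\); each primal contraction decrements the degrees of the two incident faces, which are moved to their new buckets in \(O(1)\) time, and a face whose degree drops to two is thereby discovered for free. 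Detection of series and parallel reductions becomes completely symmetric, and no adjacency test is ever needed.

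Your primal-only alternative is workable in principle, but as written it has a gap exactly at the crux. ``Maintaining, per vertex, enough incidence information to test adjacency of \(a\) and \(b\) when \(ab\) is created'' is not a mechanism: constant-time adjacency testing in a dynamically changing graph does not follow from doubly-linked incidence lists, and this is the whole difficulty rather than a routine detail. As it stands your proof rests entirely on the citation to Valdes, Tarjan, and Lawler, whose linear-time recognition algorithm does discharge the lemma --- but then the lemma's content \emph{is} that algorithm, and your one-line sketch of how it achieves the test is not an accurate account of it. Either state the lemma as a direct corollary of their result, or supply an actual amortized-\(O(1)\) detection mechanism; the paper's dual-graph bookkeeping is one such mechanism and is arguably the simplest for the biconnected planar case at hand. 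A minor further point: the paper performs the reductions in alternating exhaustive rounds, which is what produces the layered history (maximal parallel-path subnetworks) consumed later in the decomposition into nested abaci; a single worklist processed in arbitrary order yields \emph{a} valid history but not automatically this round structure, so ``emerges automatically from the worklist order'' needs either a short justification or a normalization pass over the resulting decomposition tree.
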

\begin{proof} Series-parallel networks are planar~\cite{duffin1965topology}. We maintain a series-parallel network \(N\) together with its dual \(N^*\) throughout the following reduction process illustrated in Fig.\nobreakspace \ref {fig::thm::reconstruction}. We keep two arrays \(d\) and \(d^*\) with \(n\) entries each where \(d[i]\) stores a list of the vertices of \(N\) with degree \(i\) and \(d^*[i]\) stores a list of the vertices of \(N^*\) (faces of \(N\)) with degree \(i\). Each vertex of \(N\) and each face of \(N\)  maintain a pointers to their position in the lists to facilitate constant time deletions. 

We proceed in alternating rounds where we either reverse  as many series operations or as many parallel operations as possible in each round. To reverse series operations we delete all degree two vertices of \(N\). 
\begin{figure}[p]
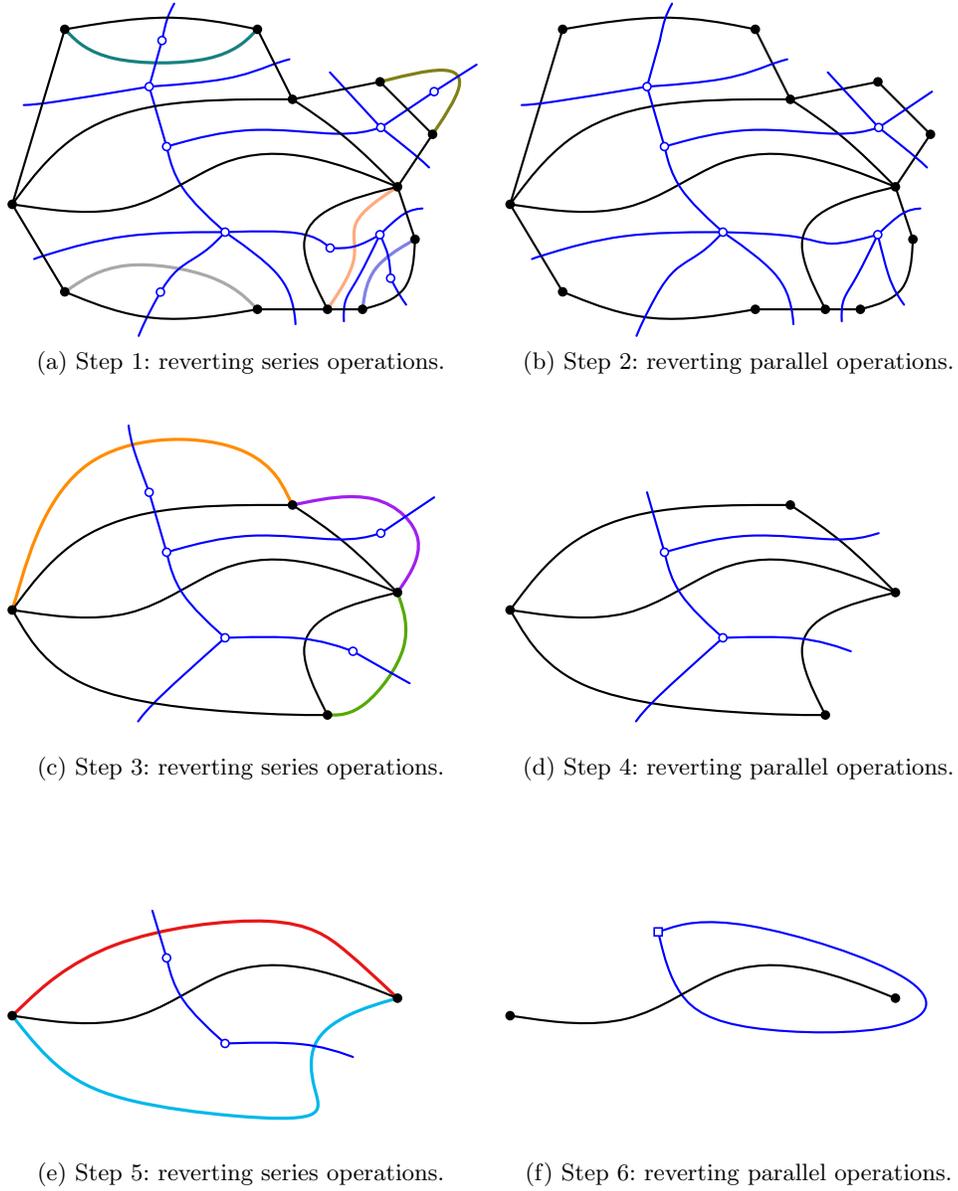

	\subfloat[][Step 1: reverting series operations.]{\includegraphics[page=3]{seriesparallel}}\quad	
	\subfloat[][Step 2: reverting parallel operations.]{\includegraphics[page=4]{seriesparallel}}
	
	\subfloat[][Step 3: reverting series operations.]{\includegraphics[page=5]{seriesparallel}}\quad		
	\subfloat[][Step 4: reverting parallel operations.]{\includegraphics[page=6]{seriesparallel}}
	
	\subfloat[][Step 5: reverting series operations.]{\includegraphics[page=7]{seriesparallel}}\quad
	\subfloat[][Step 6: reverting parallel operations.]{\includegraphics[page=8]{seriesparallel}} 
	
	\caption{The reduction process for the two-terminal series-parallel network from Fig.\nobreakspace \ref {fig::seriesparallel::creation}. In each step, we contract all degree two vertices in the primal network to revert series operations or we contract all degree two vertices in the dual (blue) to revert parallel operations. When reverting parallel operations, the colors indicate how we backtrack the creation history in Fig.\nobreakspace \ref {fig::seriesparallel::creation}. For the sake of clarity, the dual vertex for the outer face (blue square) is only shown in the last step where no further reduction is possible.  \label{fig::thm::reconstruction}}
\end{figure}

Each deletion changes the degree of two faces of \(N\) that were incident to the removed vertex so we move these faces to their new positions in \(d^*\). To reverse parallel operations we proceed in the exact same fashion by removing all degree two vertices of \(N^*\) and updating \(d\) accordingly. We recover the creation history of \(N\) by keeping track of when parallel edges were removed during the reversal of a parallel operation. This reduces \(N\) to the edge connecting its terminals in \(O(\s + \p)\) steps, since reversing each of the \(\s + \p\) operations takes constant time.  \qed
\end{proof}

Once we know the terminals \(u\) and \(v\) of \(N\), we compute the shortest path distances from \(u\) and from \(v\) in  \(O(n \log \p)\) time.\footnote{The priority queue in Dijkstra's algorithm manages never more than \(\p\) entries.} Consulting the creation history, we determine a maximal parallel-path sub-network \(P\) of \(N\) with terminals \(u\) and \(v\). As illustrated in Figs.\nobreakspace \ref {fig::nesting} and\nobreakspace  \ref {fig::nestingtree}, every bi-connected component \(X\) of \(N\) that is attached to some path of \(P\) between vertices \(a\) and \(b\) is again a two-terminal series-parallel network with terminals \(a\) and \(b\). We recurse on these bi-connected components. When this recursion returns, we know a longest \(a\)-\(b\)-path in \(X\) and attach an arc from \(a\) to \(b\) of this length to \(P\). The resulting network is an abacus \(A\). The abaci created during the recursion form a tree \(\mathcal{T}\) with root~\(A\). Alongside with this decomposition we also create our data structures for the nested abaci. 

The size of the resulting data structure remains \(O(n)\), since the data structure for each nested abaci consumes space linear in the number of its vertices and each vertex in any nested abaci can be charged to one of the series or parallel operations required to generate the original network.

\begin{figure}[htpb]
	\centering%\vspace{-\baselineskip}%
	\subfloat[][A two-terminal series-parallel network.]{\includegraphics[page=]{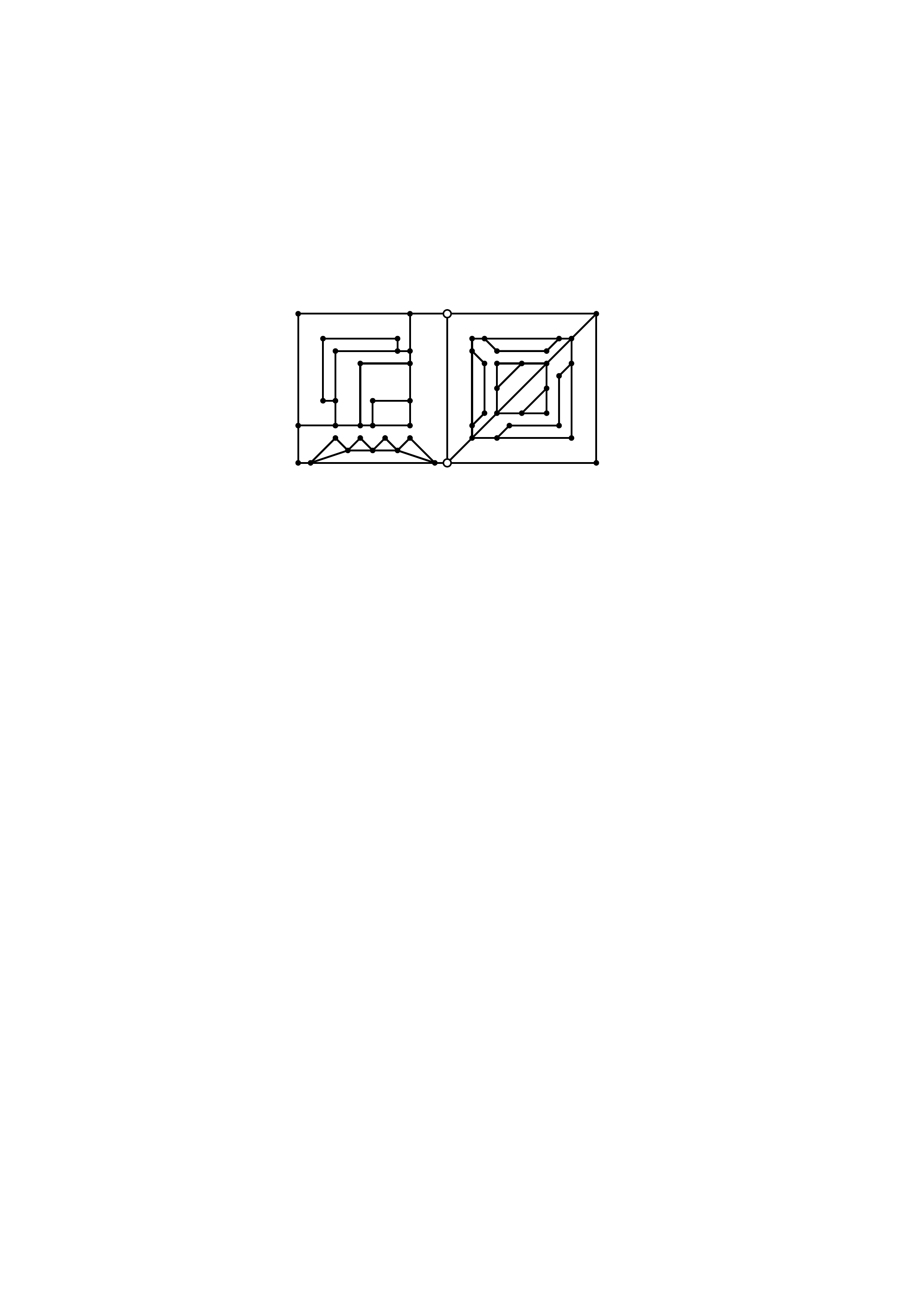}}%
	
	\subfloat[][]{\includegraphics[scale=0.67, page=2]{nesting}\quad\includegraphics[scale=0.67, page=3]{nesting}}	
	
	\subfloat[][]{\includegraphics[scale=0.67, page=5]{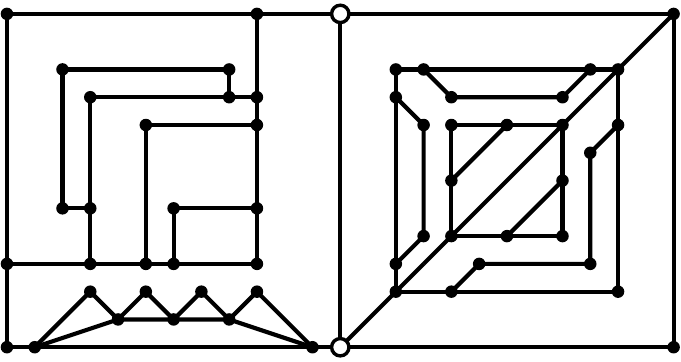}\quad\includegraphics[scale=0.67, page=6]{nesting-crop}}\quad	
	\subfloat[][]{\includegraphics[scale=0.67, page=4]{nesting-crop}}\quad
	\subfloat[][]{\includegraphics[scale=0.67, page=10]{nesting-crop}\quad\includegraphics[scale=0.67, page=11]{nesting-crop}}
	
	\subfloat[][]{	\includegraphics[scale=0.67, page=7]{nesting-crop}\quad%
					\includegraphics[scale=0.67, page=8]{nesting-crop}}\quad%
	\subfloat[][]{	\includegraphics[scale=0.67, page=9]{nesting-crop}}\hfill
	\subfloat[][]{	\includegraphics[scale=0.67, page=12]{nesting-crop}\quad%
					\includegraphics[scale=0.67, page=13]{nesting-crop}}\quad%
	\subfloat[][]{	\includegraphics[scale=0.67, page=14]{nesting-crop}}
	
	\caption{ Decomposing a two-terminal series-parallel network (a) into nested abaci. Each involved network is two-terminal series-parallel and we mark the terminals with empty circles. Replacing the nested structures in each involved network (b--g) yields the abacus network shown to its right. Each colored cycle consists of the shortest path and the longest path connecting the terminals of each nested structure \(X\); their weighted lengths determine the weight of the corresponding arc replacing \(X\). This process terminates when we reach an abacus (d, g, i), i.e., when there are no more nested structures. \label{fig::nesting}} %\vspace{-\baselineskip}
\end{figure}

\begin{figure}
	\centering
	\includegraphics{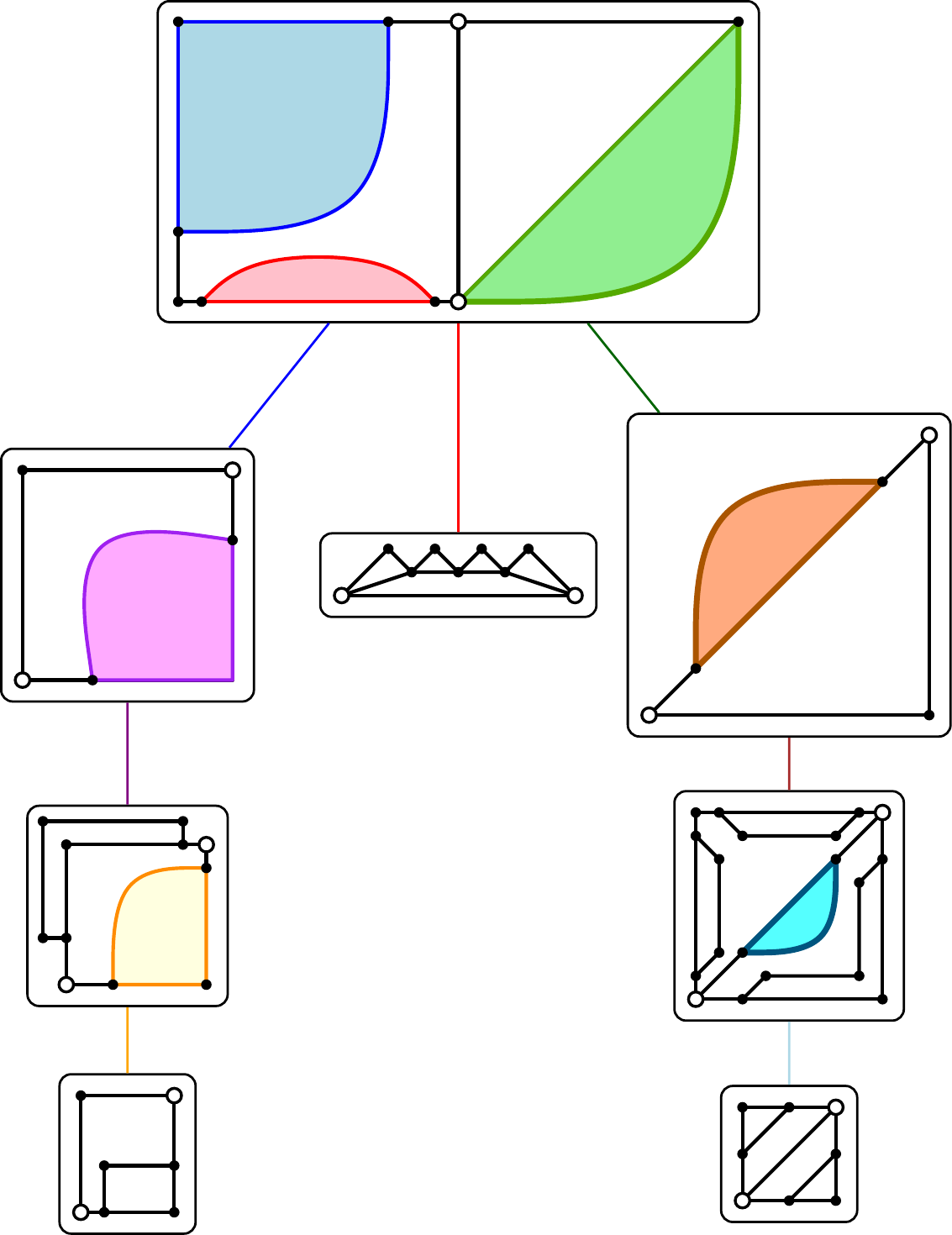}
	\caption{The tree of nested abaci for the two-terminal series-parallel network from Fig.\nobreakspace \ref {fig::nesting}. The inner nodes of this tree correspond to two-terminal networks with nested structures that are indicated with colors; the leaves correspond to abacus networks without nested structures. A query would start at the root abacus and cascade into nested structures when necessary. For instance, when a query at the root abacus yields a farthest point on the blue arc and a farthest point on the red arc, we would perform subsequent queries in the abaci stored in the left and middle child of the root.  \label{fig::nestingtree}}
\end{figure}

We translate a query \(q\) to a query in the abacus~\(A\); queries from a bi-connected component \(X\) attached to \(P\) in \(N\) will be placed on the corresponding arc of \(A\). Whenever the query in \(A\) returns a farthest point on some arc, we cascade the query into the corresponding nested data structure. We add shortcuts to the abacus tree \(\mathcal{T}\) in order to avoid cascading through too many levels of \(\mathcal{T}\) without encountering farthest-points from the original query. This way, answering farthest-point queries in \(N\) takes \(O(k + \log n)\) time in total. 

\begin{theorem}
Let \(N\) be a two-terminal series-parallel network with \(n\) vertices and parallelism \(\p\). There is a data structure of size \(O(n)\) with \(O(n \log \p)\) construction time  that supports  \(O(k + \log n)\)-time farthest-point queries from any point on \(N\), where \(k\) is the number of farthest points. 
\end{theorem}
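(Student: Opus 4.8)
The plan is to realize the nested-abacus decomposition sketched above and to glue the per-abacus data structures into a single searchable tree. First I would apply Lemma\nobreakspace\ref{thm::reconstruction} to recover the creation history of \(N\) and its terminals \(u,v\) in \(O(\s+\p)=O(n)\) time, then run Dijkstra from each terminal to obtain the distances \(d(u,\cdot)\) and \(d(v,\cdot)\) in \(O(n\log\p)\) time, using that the priority queue never holds more than \(\p\) entries. Consulting the creation history, I peel off a maximal parallel-path sub-network \(P\) with terminals \(u,v\), recurse on each bi-connected component \(X\) attached to a path of \(P\) (each itself two-terminal series-parallel), and replace \(X\) by an arc whose weight equals the longest terminal-to-terminal path inside \(X\) returned by the recursion. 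Every level of this process yields an abacus, and the abaci so produced form the tree \(\mathcal{T}\) rooted at the top-level abacus \(A\). Alongside the decomposition I build, for each abacus, the data structure of the preceding abacus theorem.

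For the resource bounds I would charge every vertex of every nested abacus to one of the \(\s+\p\) series or parallel operations that generate \(N\); hence the abaci contain \(O(n)\) vertices in total. Since an abacus data structure occupies space linear in its vertex count and is built in \(O(n_X\log\p_X)\) time, summing over \(\mathcal{T}\) gives \(O(n)\) total size and \(O(n\log\p)\) total construction time, as claimed.

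For queries I translate the query point \(q\) to the root abacus \(A\): a point lying inside a nested component \(X\) attached to \(P\) is placed on the arc of \(A\) that represents \(X\), mirroring the inward/outward split used for abaci. A single farthest-point query in \(A\) reports the farthest distance together with the arcs carrying farthest points; this farthest distance is already the true one, because the arc weights equal longest paths and therefore capture the extremal distances. Whenever a reported arc represents a nested structure I cascade into the corresponding child of \(\mathcal{T}\). The structural fact that makes this efficient is that every \emph{outward} cascade enters its nested structure through a fixed terminal, so its answer does not depend on \(q\) and can be precomputed once during construction; only the initial query from \(q\) at \(A\) and the descent along the single branch of \(\mathcal{T}\) that actually contains \(q\) involve genuine searches.

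The hard part is bounding the query time by \(O(k+\log n)\) rather than \(O(\text{depth}(\mathcal{T})\cdot\log n+k)\). The obstacle is a single farthest point buried deep in \(\mathcal{T}\), reached through a chain of \emph{pass-through} arcs that each forward the query to exactly one child while contributing no farthest point of their own; cascading naively through such a chain costs one step per level. I would neutralize this by precomputing, bottom-up, shortcut pointers that collapse each maximal pass-through chain, so the cascade jumps directly to the deepest structure that really carries a farthest point. The query then traces a subtree of \(\mathcal{T}\) whose branch points and leaves number \(O(k)\); with the shortcuts in place each such node costs \(O(1)\) beyond emitting its output, and adding the single \(O(\log n)\) search at the root yields the stated \(O(k+\log n)\) query time.
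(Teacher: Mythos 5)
Your proposal matches the paper's argument essentially step for step: terminal identification via Lemma\nobreakspace\ref{thm::reconstruction}, Dijkstra with a \(\p\)-bounded priority queue, recursive replacement of nested bi-connected components by arcs weighted by longest terminal-to-terminal paths, the tree \(\mathcal{T}\) of nested abaci with the charging argument for \(O(n)\) size, and query cascading with shortcut pointers to collapse pass-through chains. Your elaboration of why the shortcuts yield \(O(k+\log n)\) rather than a depth-dependent bound is, if anything, more explicit than the paper's own sketch.
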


\section{Conclusion and Future Work} \label{sec::conclusion}

In previous work, we learned how to support farthest-point queries by exploiting the treelike structure of cactus networks~\cite{bose2015optimal}. In this work, we extended the arsenal by techniques for dealing with parallel structures, as well. In future work, we aim to tackle more types of networks such as planar networks, \(k\)-almost trees~\cite{gurevich1984solving}, or generalized series-parallel networks~\cite{korneyenko1994combinatorial}. Moreover, we are also interested in lower bounds on the construction time of data structures supporting efficient farthest-point queries to guide our search for optimal data structures. 

\bibliography{twoterminal.bib}{}
\bibliographystyle{splncs03}

\end{document}